\newcommand{\db}{\mathit{D\!B}}
\newcommand{\pdb}{P_\mathit{D\!B}}
\newcommand{\mswz}{{\tt msw }}
\newcommand{\pcfg}{\bf PG}
\newcommand{\expl}{\mathit{expl}}
\newcommand{\eq}{\mathit{eq}}
\newcommand{\derive}{ \stackrel{*}{\Rightarrow} }
\newcommand{\ddefined}{\stackrel{{\rm def}}{=}}
\newcommand{\argmax}[1]{\mathop{\rm argmax}\limits_{#1}}
\begin{document}

\bibliographystyle{acmtrans}
\long\def\comment#1{}
 
\title{ Infinite probability computation by cyclic explanation graphs }

\author[T. Sato and P. Meyer]
{ Taisuke Sato \\
  Tokyo Institute of Technology \\
  E-mail: sato@mi.cs.titech.ac.jp
\and
  Philipp Meyer\\
  Technical University Munich \\
  E-mail: meyerphi@in.tum.de
}

\pagerange{\pageref{firstpage}--\pageref{lastpage}}
\volume{\textbf{10} (3):}
\jdate{March 2002}
\setcounter{page}{1}
\pubyear{2002}

\maketitle

\label{firstpage}

\begin{abstract}
Tabling in logic programming has  been used to eliminate redundant computation
and also to stop infinite loop.  In this paper\footnote{
This paper is  based on \cite{Sato12a} and extended with  a theorem for prefix
PCFGs, a detailed explanation for  tabling, the addition of PLCGs, experiments
with a  real corpus and  two nonlinguistic applications: plan  recognition and
probabilistic model checking.
} we investigate another possibility  of tabling, i.e.\ to compute an infinite
sum  of  probabilities  for  probabilistic  logic programs.   Using  PRISM,  a
logic-based  probabilistic  modeling language  with  a  tabling mechanism,  we
generalize  prefix  probability  computation  for probabilistic  context  free
grammars (PCFGs) to probabilistic logic programs.  Given a top-goal, we search
for all proofs  with tabling and obtain an  explanation graph which compresses
them and  may be cyclic.  We  then convert the  explanation graph to a  set of
linear probability equations and solve them by matrix operation.  The solution
gives us the probability of the top-goal, which, in nature, is an infinite sum
of  probabilities.  Our  general  approach to  prefix probability  computation
through tabling not  only allows to deal with  non-PCFGs such as probabilistic
left-corner grammars (PLCGs) but has applications such as plan recognition and
probabilistic model checking and makes  it possible to compute probability for
probabilistic models describing cyclic relations.
To appear in Theory and Practice of Logic Programming (TPLP).
\end{abstract}

\begin{keywords}
tabling, probability computation, prefix, probability equation
\end{keywords}

\section{Introduction}
\label{sec:intro}
Combining logic  and probability in  a logic programming language  provides us
with a  powerful modeling tool  for machine learning.  The  resulting language
allows  us to  build  complex  yet comprehensible  probabilistic  models in  a
declarative way.   PRISM~\cite{Sato97,Sato01g,Sato08a} is one  of the earliest
attempts to develop such a language.   It covers a large class of known models
including   Bayesian  networks   (BNs),  hidden   Markov  models   (HMMs)  and
probabilistic context  free grammars  (PCFGs) and computes  probabilities with
the same time complexity as their standard algorithms\footnote{
They are the  junction tree algorithm for BNs,  the forward-backward algorithm
for HMMs and the inside-outside algorithm for PCFGs.
}, as well as unknown models such as probabilistic context free graph grammars
\cite{Sato08c}.

The   efficiency  of  probability   computation  in   PRISM  is  attributed to
the use of tabling~\cite{Tamaki86,Warren92,Rocha05,Zhou08,Zhou10}\footnote{
Tabling is  also employed by other probabilistic  logic programming languages
such as ProbLog~\cite{Mantadelis10} and PITA~\cite{Riguzzi11}.
} that eliminates redundant computation.   Given a top-goal $G$, we search for
all proofs of $G$\footnote{
In  this paper,  we  mean  by a  proof  of a  goal  $G$  an SLD-refutation  of
$\Leftarrow$${G}$.
} while  tabling probabilistic goals and recording  their logical dependencies
as a set $\expl(G)$ of propositional formulas with a graphical structure which
we  call an  {\em explanation  graph\/} for  $G$~\cite{Sato01g}.   By applying
dynamic  programming  to $\expl(G)$  when it  is   acyclic and  partially
ordered we  can efficiently compute the  probability of $G$ in  time linear in
the size  of the graph.   The use of  tabling also gives us  another advantage
over non-tabled  computation; it stops  infinite loop by  detecting recurrence
patterns of goals.  Tabled logic programs thus can directly use left recursive
rules in CFGs without the need of converting them to right recursive ones.\\

In  this paper we  investigate another  possibility of  tabling that  has gone
unnoticed in  the non-probabilistic  setting; we apply  tabling to  compute an
infinite sum of probabilities that  typically appears in the context of prefix
probability   computation   for  PCFGs~\cite{Jelinek91,Stolcke95,Nederhof11a}.
PCFGs are  a probabilistic extension of  CFGs in which CFG  rules are assigned
probabilities and the  probability of a sentence is  computed as a sum-product
of   probabilities    assigned   to   the    rules   used   to    derive   the
sentence~\cite{Baker79,Manning99}.  A prefix $u$  is an initial substring of a
sentence.  The  probability of  the prefix  $u$ is a  sum of  probabilities of
infinitely  many sentences  of  the form  $uv$  for some  string $v$.   Prefix
probability is useful in  speech recognition as discussed in \cite{Jelinek91}.
We  generalize this prefix  probability computation  for PCFGs  to probability
computation on  {\em cyclic explanation graph\/}s generated  by PRISM programs
using tabled  search.  Since we can  use arbitrary programs,  our approach not
only  allows us  to  deal  with non-PCFGs  such  as probabilistic  left-corner
grammars  (PLCGs)  in  addition  to  PCFGs,  but  opens  a  way  to  practical
applications such  as planning and model  checking as will  be demonstrated in
Section~\ref{sec:plan_recog}     and     in     Section~\ref{sec:reachability}
respectively.

PRISM constructs an explanation graph for a top-goal $G$ by collecting clauses
used in a proof of $G$ while checking  if there is a loop, i.e.\ if there is a
proved goal  that calls itself as  one of its descendent  goals.  Loops easily
occur for example in programs for prefix of PCFGs and in ones for Markov chain
containing self  loops.  By default whenever  PRISM detects a  loop during the
construction of the  explanation graph, it fails with an  error message but by
setting   {\tt    error\_on\_cycle}   flag    to   {\tt   off}    using   {\tt
  set\_prism\_flag/2}, we can let PRISM skip loop checking and as a result can
obtain a  cyclic explanation graph. So constructing  cyclic explanation graphs
requires no extra cost in PRISM.

However,   while   computing   probability   from  such   cyclic   graphs   is
possible~\cite{Etessami09}, efficient computation  is difficult except for the
case of {\em linear cyclic explanation graph\/}s that can be turned into a set
of   linear  probability  equations   straightforwardly  solvable   by  matrix
operation.  So  the practical  issue is to  guarantee the linearity  of cyclic
explanation  graphs.   We specifically  examine  a  PRISM  program for  prefix
probability computation for PCFGs and  prove that the program always generates
linear  cyclic  explanation  graphs.   We  also  prove  that  the  probability
equations obtained from  the linear cyclic explanation graphs  are solvable by
matrix operation under some mild assumptions on PCFGs.

To empirically test  our approach, we conduct experiments  of computing prefix
probability for  a PCFG and also  for a PLCG  using a real corpus  of moderate
size.  To our  knowledge, prefix probability computation for  PLCGs is new and
has not been  attempted so far.  As applications,  we apply prefix probability
computation to  plan recognition  in which action  sequences are  derived from
plans using a PCFG.  Our task is  to infer, given an action sequence, the plan
underlying it.  Note that we do not require the action sequence to be complete
as             a              sentence             unlike             previous
approaches~\cite{Bobick98,Lymberopoulos07,Amft07,Geib11}  as  we  are able  to
deal  with  prefix  action sequences.   We  also  apply  our approach  to  the
reachability      probability     problem      in      probabilistic     model
checking~\cite{Hinton06,Gorlin12}.  This  class of problems  needs to describe
Markov chains and to compute  the reachability probability between two states.
The experiment suggests that our approach is reasonably fast.

In  what  follows,  we  first  review  probability  computation  in  PRISM  in
Section~\ref{sec:probcomp}.   In   Section~\ref{sec:prepcfg}  we  explain  how
prefix probability  is computed for PCFGs  in PRISM together  with some formal
proofs.   Then we  tackle the  problem of  prefix probability  computation for
PLCGs in Section~\ref{sec:preplc}.  We apply prefix probability computation to
plan  recognition  in  Section~\ref{sec:plan_recog}  and to  the  reachability
probability     problem     in     probabilistic     model     checking     in
Section~\ref{sec:reachability}.      Section~\ref{sec:relatedwork}    contains
related work  and Section~\ref{sec:conclusion}  is the conclusion.   We assume
the reader has a basic familiarity with PRISM~\cite{Sato01g,Sato08a}.

\section{Probability computation in PRISM}
\label{sec:probcomp}
We   review   probability  computation   in   PRISM  for   self-containedness.
PRISM\footnote{
{\tt http://sato-www.cs.titech.ac.jp/prism/}
} is a probabilistic extension  of Prolog with built-in predicates for machine
learning     tasks    such    as     parameter    learning     and    Bayesian
inference~\cite{Sato01g,Sato08a}.   Theoretically a PRISM  program $\db$  is a
union $R \cup F$ of a set of  definite clauses $R$ and and a set $F$ of ground
probabilistic  atoms of  the form  {\tt msw($id$,$v$)}  that  represent simple
probabilistic choices where $id$ and $v$ are ground terms\footnote{
We use lower case strings to represent ground terms, atoms, etc in this paper.
}. Using probabilities  assigned to {\tt msw} atoms,  $\db$ uniquely defines a
probability measure $\pdb(\cdot)$  over possible Herbrand interpretations from
which  the   probability  of  an  arbitrary  closed   formula  is  calculated.
Practically however  PRISM programs  are just Prolog  programs that  use \mswz
atoms introduced by {\tt values/2} declarations\footnote{
A declaration {\tt values($id$,[$v_1,\ldots,v_N$])} introduces a set of ground
probabilistic atoms {\tt msw($id$,$v_i$)}($1\leq i\leq N$).  They represent as
a   group  a  discrete   random  variable   on  a   sample  space   $V_{id}  =
\{v_1,\ldots,v_N\}$.  So  only one of them becomes  probabilistically true and
others are false.   To specify their distribution we use  a PRISM command {\tt
  set\_sw($id$,[$\theta_1,\ldots,\theta_N$])}    that   sets   $\pdb(\mbox{\tt
  msw($id$,$v_i$)})$, the probability of  {\tt msw($id$,$v_i$)} being true, to
$\theta_i$ ($1\leq i\leq N$) where $\sum_{v \in V_{id}} \theta_v =1$.
} as probabilistic primitives\footnote{
Procedurally, executing  {\tt msw($id$,X)} as a PRISM  goal returns $\mbox{\tt
  X} = v_i$ with probability $\theta_i$.  On the other hand a ground goal {\tt
  msw($id$,$v$)}  is equivalent to  {\tt msw($id$,X),X=$v$}  and fails  if the
value  returned  in  {\tt X}  differs  from  $v$.   We assume  that  different
occurrences of {\tt msw/2} atom in a program or in a proof are independent and
if  they have  the  same $id$,  they  represent samples  from independent  and
identically distributed random variables~\cite{Sato01g}.
}        as        shown        in        Fig.~\ref{fig:prism:prog0}        of
Subsection~\ref{subsec:exampleprepcfg}.

In PRISM,  the probability $\pdb(G)$  of a ground  atom $G$ w.r.t.\  a program
$\db$  is  basically   computed  as  a  sum  of   probabilities  of  all  {\em
  explanation\/}s  for  $G$  where  an   {\em  explanation  for\/}  $G$  is  a
conjunction $E  = {\tt msw}_1  \wedge\cdots\wedge {\tt msw}_k$ of  ground {\tt
  msw}  atoms  such  that  ${\tt msw},\ldots,{\tt  msw}_k,  \mathit{comp}(R)
\vdash G$\footnote{
$\mathit{comp}(R)$  is  the  completion  of   $R$.   It  is  a  union  of  the
  if-and-only-if form of $R$ and the so called Clark's equational theory.
}. However naively computing $\pdb(G)$ is computationally expensive because of
exponentially many explanations.  Instead we compute $\pdb(G)$ in three steps.
In  the first  step, we  perform tabled  search for  all proofs  of  $G$ while
recording clause  instantiations used in a  proof in the  external memory area
(through some  C-interface predicates).  In  the second step, we  construct an
explanation graph  $\expl(G)$ for $G$ from recorded  clause instantiations. It
compactly represents all possible  explanations for $G$ by subformula sharing.
In the third step, we convert $\expl(G)$ to a set of probability equations and
obtain $\pdb(G)$ by solving it using dynamic programming.  In the following we
discuss each of them in detail.

\subsection{Tabled search and explanation graphs}
\label{subsec:tabling}

In general there are exponentially many proofs of $G$ and so are explanations.
Fortunately  we can  often compress  them to  an equivalent  but  much smaller
representation  by  factoring  out  common  sub-conjunctions  as  intermediate
goals~\cite{Sato01g,Zhou08}.  We can express the  set of all explanations as a
set  of  {\em defining  formula\/}s  that  take  the form  $H  \Leftrightarrow
\alpha_1  \vee\ldots\vee  \alpha_M$.  Here  $H$  is  the  top-goal $G$  or  an
intermediate  goal.   Hereafter  the   top-goal  and  intermediate  goals  are
collectively  called  {\em  defined  goal\/}s.   We call  each  $H  \Leftarrow
\alpha_i$  ($1  \leq i  \leq  M$)  a {\em  defining  clause\/}  for $H$  where
$\alpha_i$  is  a  conjunction  $C_1  \wedge\ldots\wedge  C_m  \wedge  \mswz_1
\wedge\ldots\wedge \mswz_n$ ($0 \leq m,n$) of defined goals $\{ C_1,\ldots,C_m
\}$ and \mswz atoms $\{ \mswz_1,\ldots,\mswz_n \}$.

We say that $H$ is a {\em parent\/}  of $C_j$ ($1 \leq j \leq m$) and call the
transitive closure of this  parent-child relation the {\em ancestor relation}.
The whole set of defining formulas is denoted by $\expl(G)$ and called an {\em
  explanation  graph\/} for  $G$  as is  called  so far.   In $\expl(G)$  each
defined goal  has only  one defining  formula and possibly  is referred  to by
other defined goals.

An  n-ary predicate  {\tt p/n}  is  said to  be {\em  probabilistic\/} if  the
predicate symbol  {\tt p} is  {\tt msw} or  recursively, there is a  clause in
$\db$  such  that  the head  contains  the  predicate  symbol  {\tt p}  and  a
probabilistic  predicate   occurs  in  the   body.   Likewise  an   atom  {\tt
  p($t_1,\ldots,t_n$)} is  probabilistic if {\tt p/n}  is probabilistic.  Then
roughly $\expl(G)$ is obtained from exhaustive tabled search for all proofs of
$G$ while  tabling probabilistic  predicates in $\db$.   What we  actually use
however  is not  $\db$  but another  non-probabilistic  Prolog program  $\db'$
translated from $\db$  that has a mechanism of  recording instantiated clauses
used in  a proof  of $G$.  We  construct $\expl(G)$  by tabled search  for all
proofs  of  $G$ w.r.t.\  $\db'$  while  tabling  probabilistic predicates  and
collect instantiated clauses used in  a proof as defining clauses constituting
$\expl(G)$~\cite{Kameya00,Zhou03}.

$\db'$   is    obtained   by   translating    each   clause   in    $\db$   as
follows~\cite{Zhou03}\footnote{
The actual implementation is  slightly different.  Also another translation is
possible which stores defining clauses in the table~\cite{Kameya00}.
}. Suppose for example
 {\tt p(X,f(V)):-}{\tt msw(X,V),q(g(X,V)),r(V)}
is  a  clause  in  $\db$  and  also  suppose  {\tt  p/2}  and  {\tt  q/1}  are
probabilistic but {\tt r/1} is  not (generalization is easy).  We replace {\tt
  msw(X,V)} with {\tt (get\_values(X,Vs),member(V,Vs))}\footnote{
For {\tt X} = $id$, {\tt get\_values(X,Vs)} returns the list of possible values
{\tt Vs} for {\tt msw($id$,$\cdot$)}.
} and further  add a special goal  to store a defining clause  in the external
memory area.  So the translated clause is\\

\begin{tabular}{l}
{\tt p(X,f(V)):-} {\tt  get\_values(X,Vs),member(V,Vs),q(g(X,V)),r(V),} \\
     \hspace{4em}{\tt  add\_to\_db(path(p(X,f(V)),[q(g(X,V))],[msw(X,V)]))}.\\[1em]
\end{tabular}
Here {\tt  member(V,Vs)} is a  backtrackable predicate and returns  an element
{\tt    V}   in    a    list    {\tt   Vs}.     The    combined   goal    {\tt
  (get\_values(X,Vs),member(V,Vs))} thus  succeeds with some value  {\tt V} in
the    outcome   space    {\tt   Vs}    for   {\tt    msw(X,$\cdot$)}.

When   all  goals  in   {\tt  (get\_values(X,Vs),member(V,Vs),q(g(X,V)),r(V))}
succeed, {\tt add\_to\_db/1} is invoked.  {\tt add\_to\_db(path($a$,$b$,$c$))}
is a special goal that always succeeds and stores a defining clause
{\tt $a$ <= $b$ \& $c$}
for  $a$ in the  external memory  area where  $b$ is  a list  (conjunction) of
probabilistic atoms and $c$ is a list (conjunction) of {\tt msw} atoms.

The  translated   program  $\db'$   is  a  usual   Prolog  program   and  runs
isomorphically to $\db$ as far as tabled search is concerned.  We mean by {\em
  tabled  goal\/}s goals  containing a  tabled predicate,  by  {\em answer\/}s
goals  successfully  proved  and   by  {\em  tabled  answer\/}s  tabled  goals
successfully proved respectively.  Then in tabled search if a call to a tabled
goal $H$ occurs, $H$ is unfolded by  a clause in the program and tabled search
continues, or  unified with a  tabled answer stored  in the table  and returns
with success.   In the former  case, if the  search succeeds and  $H\theta$ is
proved where $\theta$ is an answer substitution, the answer $H\theta$ is added
to the table.  In the latter case, the tabling strategy determines when tabled
answers     are     consumed.     More     details are    given     in
Subsection~\ref{subsec:linear-tabling}.
In the rest of the paper,  since $\db$ and $\db'$ behave identically, when the
context is clear,  we use $\db$ and $\db'$  interchangeably for simplicity and
say for example ``all proofs of $G$ w.r.t.\ $\db$'' instead of ``all proofs of
$G$ w.r.t.\ $\db'$''.

\subsection{From explanation graphs to  probability computation}
\label{subsec:expl}
The probability $\pdb(G)$ of a given goal $G$ is precisely defined in terms of
the distribution semantics of PRISM.  But the problem is that the semantics is
so  abstractly defined  that  we cannot  know  the actual  value of  $\pdb(G)$
easily.   Here  we describe  how  to compute  it  from  $\expl(G)$ under  some
assumptions.

To  compute $\pdb(G)$,  we convert  each defining  formula  $H \Leftrightarrow
\alpha_1 \vee\ldots\vee \alpha_M$ in $\expl(G)$  to {\em a set  of probability
  equations for\/} $H$:

\begin{eqnarray}
P(H)   & = & P(\alpha_1)+\cdots+P(\alpha_M) \label{eq:prob-eq-1} \\
       &   & \;\mbox{where} \nonumber \\
P(\alpha_i) & = & P(C_1)\cdots P(C_m)\pdb(\mswz_1)\cdots  \pdb(\mswz_n) \;\;(1 \leq i \leq M)\nonumber  \\
       &   & \;\;\;\mbox{for}\;
    \alpha_i = C_1 \wedge\ldots\wedge  C_m \wedge  \mswz_1 \wedge\ldots\wedge  \mswz_n.
                                           \label{eq:prob-eq-2} \nonumber
\end{eqnarray}
We denote by  $\eq(G)$ the entire set of  probability equations thus obtained.
Note   that  the   conversion  assumes   exclusiveness  among   disjuncts  $\{
\alpha_1,\ldots, \alpha_M \}$ and independence among conjuncts $\{ C_1,\ldots,
C_m,\mswz_1,\ldots,\mswz_n \}$\footnote{
In this paper we assume  these conditions are always satisfied.  In particular
we assume  the generative exclusiveness  condition stated later  which implies
the exclusiveness among disjuncts.
}. We  consider the $P(H)$'s  in $\eq(G)$ as numerical  variables representing
unknown probabilities  and refer to  them as {\em $P$-variable\/}s.   Then the
right  hand side  of  (\ref{eq:prob-eq-1})  is  a  multivariate  polynomial  in
$P$-variables   with   non-negative  coefficients   which   are  products   of
$\pdb(\mswz)$s.

We  say  that  $\expl(G)$ {\em  is  acyclic\/}  if  the ancestor  relation  in
$\expl(G)$  is  acyclic.  When  $\expl(G)$  is acyclic  as  is  the case  with
standard  generative models  such as  BNs, HMMs  and PCFGs,  defined  goals in
$\expl(G)$ are  hierarchically ordered by  the ancestor relation (with  $G$ as
top-most  element) and  the P-variables  in $\eq(G)$  are  also hierarchically
ordered.   As  a result  $\eq(G)$  is uniquely  and  efficiently  solved in  a
bottom-up manner  by dynamic programming using  the generalized inside-outside
(IO) algorithm~\cite{Sato01g}  in time linear in the size  of $\eq(G)$ and the
unique solution gives $P(G)=\pdb(G)$.\\

There are  however cases where  $\expl(G)$ is cyclic  and so is  $\eq(G)$, and
hence it is impossible to apply dynamic programming to $\eq(G)$, or even worse
$\eq(G)$  may  not  have a  unique  solution  when  $\eq(G)$  is a  system  of
polynomial  equations of  second  degree or  higher.   Nonetheless, no  matter
whether it is cyclic or not, we can prove at least the existence of a solution
for $\eq(G)$ thanks  to the special form and properties  of $\eq(G)$ under the
{\em generative exclusiveness condition}; at any choice point in any execution
path of  the top-goal, a choice  of alternative path  is made by the  value of
{\tt X} sampled from {\tt msw($id$,X)}.  We quickly remark that this condition
is naturally satisfied by PRISM  programs for generative models in general and
BNs, HMMs and  PCFGs in particular, because in a  generative model, an outcome
is  generated  by a  sequence  of probabilistic  choices  and  the process  is
simulated by {\tt msw} atoms.

The  generative exclusiveness condition  implies that  every disjunction  in a
defining formula is exclusive and  originated from a probabilistic choice made
by  some  {\tt  msw}.   So  a defining  formula  $H  \Leftrightarrow  \alpha_1
\vee\cdots\vee   \alpha_M$  is  written   as  $H   \Leftrightarrow  (\mbox{\tt
  msw($id_H$,$v_1$)}\wedge\beta_1)          \vee\cdots\vee          (\mbox{\tt
  msw($id_H$,$v_M$)}\wedge\beta_M)$  for some  {\tt  msw($id_H$,$\cdot$)} that
has a sample space $  V_{id_H}$ such that $V_{id_H}\supseteq \{ v_1,\ldots,v_M
\}$.
Denote the  vector of  P-variables in  $\expl(G)$ by ${\bf  X}^G$ and  write a
component  $P(H)$ as  $X_H$.  Then  the probability  equation about  $P(H)$ is
represented   as  $X_H   =  T_H({\bf   X}^G)  =   \sum_{i=1}^M  \pdb(\mbox{\tt
  msw($id_H$,$v_i$)})\varphi_i^H({\bf X}^G)$ where $\varphi_i^H({\bf X}^G)$ is
a  product  of some  $\pdb({\tt  msw})$s and  variables  in  ${\bf X}^G$.   We
represent $\eq(G)$ as ${\bf X}^G = T({\bf X}^G)$.
Now define a vector  sequence $\left\{ {\bf X}^G_k \right\}_{k=0}^{\infty}$ by
${\bf X}^G_0 = {\bf 0}$\footnote{
We use {\bf 0} (resp.  {\bf 1}) to  denote a vector of $0$s (resp. a vector of
$1$s).
} and ${\bf X}^G_{k+1} = T({\bf  X}^G_k)$ for $k \geq 1$.  Then ${\bf X}^G_{k}
= T^{(k)}({\bf 0})$ ($k \geq 1$).  First we prove two lemmas.

\begin{lemma}
\label{lemma-1}
$T(\cdot)$ is monotonic, i.e.\ ${\bf X}^G \leq {\bf Y}^G$
implies $T({\bf X}^G) \leq T({\bf Y}^G)$\footnote{
For  $N$  dimensional vectors  ${\bf  X} = (x_1,\ldots,x_N)$  and ${\bf  Y}  =
(y_1,\ldots,y_N)$, we  write ${\bf X}  \leq {\bf Y}$  (resp.\ ${\bf X}  < {\bf
Y}$) if $x_i  \leq y_i$ (resp.  $x_i <  y_i$) for every $i$ $(1  \leq i \leq N)$.
}.
\end{lemma}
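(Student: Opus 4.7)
The plan is to prove the inequality componentwise, leveraging the explicit polynomial form of each component of $T$ derived just above the lemma statement. Recall that $T_H({\bf X}^G) = \sum_{i=1}^M \pdb(\mbox{\tt msw($id_H$,$v_i$)})\varphi_i^H({\bf X}^G)$, where each $\varphi_i^H$ is a product of $\pdb({\tt msw})$-values and P-variables drawn from ${\bf X}^G$. The key structural observation is that the coefficients $\pdb(\mbox{\tt msw($id_H$,$v_i$)})$ and the non-variable factors inside each $\varphi_i^H$ are all probabilities, hence non-negative. Consequently each $T_H$ is a multivariate polynomial in the P-variables with non-negative coefficients.

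Given this, I would proceed in two steps. First, I would verify the elementary fact that any monomial $c\,x_{j_1}x_{j_2}\cdots x_{j_k}$ with $c\geq 0$ is monotone on the non-negative orthant: if ${\bf 0}\leq {\bf X}^G\leq {\bf Y}^G$ then $x_{j_l}\leq y_{j_l}$ for every $l$, and multiplying non-negative reals preserves the inequality. Second, since $T_H$ is a finite non-negative combination of such monomials, summing the monomial inequalities yields $T_H({\bf X}^G)\leq T_H({\bf Y}^G)$. Doing this for every defined goal $H$ gives $T({\bf X}^G)\leq T({\bf Y}^G)$ componentwise, which is the claim.

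The only subtlety---and the single step that requires explicit care---is to justify restricting attention to the non-negative orthant, since strictly on all of $\mathbb{R}^N$ a polynomial with non-negative coefficients need not be monotone. This is easily handled: the P-variables are meant to stand for probabilities, so the intended domain of $T$ is contained in $[0,1]^N$, and the iterates ${\bf X}^G_k = T^{(k)}({\bf 0})$ introduced immediately before the lemma automatically stay in the non-negative orthant by an obvious induction, since $T$ maps non-negative vectors to non-negative vectors (all coefficients and msw-factors being non-negative). The lemma is therefore to be read on this natural domain, on which the argument above goes through without change.
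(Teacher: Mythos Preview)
Your proposal is correct and follows essentially the same route as the paper: reduce to a single component $T_H$, observe that each $\varphi_i^H$ is a product of non-negative $\pdb({\tt msw})$ factors and P-variables and hence monotone in the variables, and then sum with the non-negative weights $\pdb(\mbox{\tt msw($id_H$,$v_i$)})$. Your explicit remark about restricting to the non-negative orthant is a small clarification the paper leaves implicit, but it does not change the argument.
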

\begin{proof}
It is enough to prove that ${\bf X}^G \leq {\bf Y}^G$ implies $ T_H({\bf X}^G)
\leq T_H({\bf  Y}^G)$ for an  arbitrary component $T_H({\bf X}^G)$  of $T({\bf
  X}^G)$.   Suppose ${\bf X}^G  \leq {\bf  Y}^G$ and  write $T_H({\bf  X}^G) =
\sum_{i=1}^M   \pdb(\mbox{\tt  msw($id_H$,$v_i$)})   \varphi_i^H({\bf  X}^G)$.
Since every $\varphi_i^H({\bf  X}^G)$ is  a product of  some $\pdb({\tt  msw})$s and
variables in ${\bf X}^G$, ${\bf X}^G \leq {\bf Y}^G$ implies $\varphi_i^H({\bf
  X}^G) \leq \varphi_i^H({\bf Y}^G)$ for every $i$. Hence
\begin{eqnarray*}
T_H({\bf X}^G)
   & =    & \sum_{i=1}^M \pdb(\mbox{\tt msw($id_H$,$v_i$)}) \varphi_i^H({\bf X}^G) \\
   & \leq & \sum_{i=1}^M \pdb(\mbox{\tt msw($id_H$,$v_i$)}) \varphi_i^H({\bf Y}^G)
            = T_H({\bf Y}^G).
\end{eqnarray*}
\end{proof}

\begin{lemma}
\label{lemma-2}
Suppose the generative exclusiveness condition is satisfied.
$\left\{  {\bf X}^G_k \right\} _{k=0}^{\infty}$ is bounded from above;
${\bf X}^G_k \leq {\bf 1}$ for every $k \geq 0$.
\end{lemma}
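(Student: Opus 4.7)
The plan is to prove the bound $\mathbf{X}^G_k \leq \mathbf{1}$ by induction on $k$. The base case is immediate since $\mathbf{X}^G_0 = \mathbf{0} \leq \mathbf{1}$. For the inductive step, assuming $\mathbf{X}^G_k \leq \mathbf{1}$, I need to show that $T(\mathbf{X}^G_k) \leq \mathbf{1}$ componentwise, i.e.\ $T_H(\mathbf{X}^G_k) \leq 1$ for every defined goal $H$.

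The key is to exploit the specific form of $T_H$ guaranteed by the generative exclusiveness condition. Recall that this condition lets us write $T_H(\mathbf{X}^G) = \sum_{i=1}^M \pdb(\mbox{\tt msw($id_H$,$v_i$)})\varphi_i^H(\mathbf{X}^G)$ with distinct values $v_1,\ldots,v_M$ drawn from the sample space $V_{id_H}$, and $\varphi_i^H$ a product of $\pdb({\tt msw})$ factors and components of $\mathbf{X}^G$. Since each $\pdb({\tt msw})$ factor lies in $[0,1]$ and, by the induction hypothesis, each component of $\mathbf{X}^G_k$ is in $[0,1]$, each $\varphi_i^H(\mathbf{X}^G_k)$ is a product of numbers in $[0,1]$ and hence itself at most $1$.

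Combining these bounds gives
\begin{eqnarray*}
T_H(\mathbf{X}^G_k)
 & = & \sum_{i=1}^M \pdb(\mbox{\tt msw($id_H$,$v_i$)})\,\varphi_i^H(\mathbf{X}^G_k) \\
 & \leq & \sum_{i=1}^M \pdb(\mbox{\tt msw($id_H$,$v_i$)})
   \;\leq\; \sum_{v \in V_{id_H}} \pdb(\mbox{\tt msw($id_H$,$v$)}) \;=\; 1,
\end{eqnarray*}
where the last equality follows because the {\tt msw} atoms in a single switch group form a probability distribution over $V_{id_H}$. This closes the induction and establishes the claim.

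The only subtle step is the last one: everything hinges on recognizing that the generative exclusiveness condition not only forces mutual exclusion of the disjuncts $\alpha_i$ but also pins the coefficients in the probability equation to a \emph{single} switch group, so that their sum is bounded by $1$. Without this structural fact, the $\pdb({\tt msw})$ coefficients could accumulate without restriction (as they would for an arbitrary system of nonnegative polynomial equations), and no componentwise bound by $\mathbf{1}$ would follow. Once that observation is in hand, the rest of the argument is a straightforward induction that does not even require Lemma~\ref{lemma-1}.
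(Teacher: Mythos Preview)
Your proof is correct and follows essentially the same approach as the paper: induction on $k$, with the inductive step using the form $T_H(\mathbf{X}^G)=\sum_{i=1}^M \pdb(\mbox{\tt msw($id_H$,$v_i$)})\varphi_i^H(\mathbf{X}^G)$ guaranteed by the generative exclusiveness condition, the bound $\varphi_i^H(\mathbf{X}^G_k)\leq 1$ from the induction hypothesis, and the fact that $\sum_{v\in V_{id_H}}\pdb(\mbox{\tt msw($id_H$,$v$)})=1$. Your closing remark about why the single-switch structure is essential is a nice clarification but not a departure from the paper's argument.
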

\begin{proof}
For $k=0$,  ${\bf X}^G_0  = {\bf 0}  \leq {\bf  1}$ holds.  Suppose  $k>0$ and
inductively  assume  ${\bf X}^G_k  \leq  {\bf  1}$  holds.  Let  $X^H_{k+1}  =
T_H({\bf X}^G_k)$ be a probability equation  in ${\bf X}^G = T({\bf X}^G)$. We
see
\begin{eqnarray*}
X^H_{K+1}
 & = & T_H({\bf X}^G_k) \hspace{0.5em}
  = \hspace{0.5em}
     \sum_{i=1}^M \pdb(\mbox{\tt msw($id_H$,$v_i$)})\varphi_i^H({\bf X}^G_k) \\
 & \leq & \sum_{i=1}^M \pdb(\mbox{\tt msw($id_H$,$v_i$)})
 \hspace{0.5em} \leq \hspace{0.5em}
     \sum_{v \in V_{id_H}} \pdb(\mbox{\tt msw($id_H$,$v$)}) 
                                    \hspace{0.5em}=\hspace{0.5em} 1
\end{eqnarray*}
Here we use the fact that  since $\varphi_i^H({\bf X}^G)$ is a product of some
$\pdb({\tt msw})$s  and variables in ${\bf  X}^G$, ${\bf X}^G_k  \leq {\bf 1}$
implies $\varphi_i^H({\bf X}^G_k) \leq 1$.
\end{proof}

\begin{theorem}
\label{th-0}
Under   the   generative  exclusiveness   condition,   $\left\{  {\bf   X}^G_k
\right\}_{k=0}^{\infty}$  monotonically  converges to  the  least fixed  point
${\bf  X}_{\infty}^G  = T({\bf  X}_{\infty}^G)$  which  gives  a solution  for
$\eq(G)$.
\end{theorem}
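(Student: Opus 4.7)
The plan is to apply a standard Kleene-style fixed point argument, using the two preceding lemmas as the core ingredients. The statement has three assertions to establish: monotonic convergence of the iterate sequence, that the limit is a fixed point of $T$, and that this fixed point is the least one.

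First I would establish monotonicity of the sequence itself by induction on $k$. The base case $\mathbf{X}^G_0 = \mathbf{0} \leq T(\mathbf{0}) = \mathbf{X}^G_1$ is immediate since $T$ has non-negative coefficients. For the inductive step, assuming $\mathbf{X}^G_k \leq \mathbf{X}^G_{k+1}$, apply Lemma~\ref{lemma-1} (monotonicity of $T$) to obtain $\mathbf{X}^G_{k+1} = T(\mathbf{X}^G_k) \leq T(\mathbf{X}^G_{k+1}) = \mathbf{X}^G_{k+2}$. Combined with Lemma~\ref{lemma-2}, which gives the uniform bound $\mathbf{X}^G_k \leq \mathbf{1}$, each coordinate of the sequence is a monotonically increasing sequence of real numbers bounded above, hence convergent by the monotone convergence property of $\mathbb{R}$. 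Call the componentwise limit $\mathbf{X}^G_\infty$.

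Next I would verify that $\mathbf{X}^G_\infty$ is a fixed point of $T$. The key observation is that each component $T_H$ is a multivariate polynomial in the $P$-variables (with non-negative coefficients obtained from $\pdb(\mbox{\tt msw})$ values), and polynomials are continuous on $\mathbb{R}^N$. Therefore, passing to the limit,
\begin{eqnarray*}
T(\mathbf{X}^G_\infty) \;=\; T\!\left(\lim_{k\to\infty} \mathbf{X}^G_k\right)
   \;=\; \lim_{k\to\infty} T(\mathbf{X}^G_k)
   \;=\; \lim_{k\to\infty} \mathbf{X}^G_{k+1}
   \;=\; \mathbf{X}^G_\infty,
\end{eqnarray*}
which shows that $\mathbf{X}^G_\infty$ solves $\eq(G)$.

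Finally, for the least fixed point claim, let $\mathbf{Y}$ be any non-negative solution, i.e.\ $\mathbf{Y} \geq \mathbf{0}$ and $\mathbf{Y} = T(\mathbf{Y})$. Then $\mathbf{X}^G_0 = \mathbf{0} \leq \mathbf{Y}$, and by repeated application of Lemma~\ref{lemma-1} we obtain $\mathbf{X}^G_k = T^{(k)}(\mathbf{0}) \leq T^{(k)}(\mathbf{Y}) = \mathbf{Y}$ for every $k$, whence $\mathbf{X}^G_\infty \leq \mathbf{Y}$ in the limit. The only subtle point I expect is justifying that $T$ is continuous at the limit point (so that limits can be exchanged with $T$); this follows because $T$ is a polynomial and the iterates stay inside the compact box $[\mathbf{0},\mathbf{1}]$ where the polynomial is uniformly continuous. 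Everything else is a routine induction plus monotone convergence.
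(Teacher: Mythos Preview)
Your proposal is correct and follows essentially the same Kleene-style argument as the paper: monotonicity of the iterates via Lemma~\ref{lemma-1}, boundedness via Lemma~\ref{lemma-2}, convergence by monotone convergence in $\mathbb{R}$, fixed-point via continuity of the polynomial map $T$, and minimality by the induction $\mathbf{X}^G_k \leq \mathbf{Y}$. The paper's proof is slightly terser (it does not spell out the base case or the compactness remark), but there is no substantive difference.
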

\begin{proof}

$\left\{  {\bf X}^G_k \right\}_{k=0}^{\infty}$  is a  monotonically increasing
sequence  ${\bf  0}  =  {\bf   X}^G_0  \leq  {\bf  X}^G_1  \leq  \cdots$  by
Lemma~\ref{lemma-1}  which  is bounded  from  above by  Lemma~\ref{lemma-2}.
Consequently  $\left\{ {\bf X}^G_k  \right\}_{k=0}^{\infty}$ converges  to a
limit ${\bf X}^G_{\infty}$.  Furthermore  because $T$ is continuous, we have
$T({\bf  X}^G_{\infty})  =  T(\lim_{k  \rightarrow \infty}  {\bf  X}^G_k)  =
\lim_{k  \rightarrow \infty}  T({\bf X}^G_k)  = \lim_{k  \rightarrow \infty}
{\bf X}^G_{k+1} = {\bf X}^G_{\infty}$.
So we have ${\bf X}^G_{\infty} = T({\bf X}^G_{\infty})$.
Let ${\bf X'}^G  \geq {\bf 0}$ be another fixed
point of  $T$.  ${\bf  X}^G_{k} \leq  {\bf X'}^G$ for  all $k\geq  0$ is
inductively proved.  Therefore ${\bf X}_{\infty}^G = \lim_{k \rightarrow
\infty} {\bf X}^G_k \leq {\bf  X'}^G$.  Hence ${\bf X}^G_{\infty} $ is
the least fixed point of $T$.
\end{proof}

\section{Prefix probability computation for PCFGs in PRISM}
\label{sec:prepcfg}
In this section, using a concrete example,  we have a close look at how cyclic
explanation  graphs are  constructed  and investigate  their properties.   The
reader is assumed to have a basic knowledge of CFG parsing.

\subsection{A prefix parser}
\label{subsec:exampleprepcfg}

Before proceeding we introduce some terminology about CFGs for later use.  Let
$X$ be a nonterminal in a CFG, $\alpha$, $\beta$ a mixed sequence of terminals
and  nonterminals.  A  rule  for  $X$ is  a  production rule  of  the form  $X
\rightarrow \alpha$.  If there is a  rule of the form $X \rightarrow Y \beta$,
we  say  $X$  and $Y$  are  in  the  direct  left-corner relation.   The  {\em
  transitive  closure\/} of  the direct  left-corner relation  is  called {\em
  left-corner relation\/} and we write $X \rightarrow_{L}Y$ if $X$ and $Y$ are
in the left-corner relation. The  left-corner relation is {\em cyclic\/} if $X
\rightarrow_{L}X$ holds for some nonterminal $X$.   We say that a rule is {\em
  useless\/} if it  does not occur in any  sentence derivation.  A nonterminal
is {\em  useless\/} if  every rule for  it is  useless.  Otherwise it  is {\em
  useful}.  In  this paper we assume that  CFGs have ``{\tt s}''  as a default
start symbol and have no epsilon rule and no useless nonterminal.

Finally      let      $X\rightarrow\alpha_1:\theta_1,\ldots,      X\rightarrow
\alpha_n:\theta_n $ be the set of rules  for $X$ in a PCFG with {\em selection
  probabilities\/}  $\theta_1,\ldots,\theta_n$ where $\sum_{i=1}^n  \theta_i =
1$.  We assume  that every rule has a  {\em positive\/} selection probability.
If the sum  of probabilities of sentences derived from the  start symbol is 1,
the PCFG is said to  be {\em consistent\/}~\cite{Wetherell80}.  We also assume
that PCFGs are consistent.

Now we look at a concrete example of prefix probability computation based
on cyclic explanation  graphs.  Consider a CFG, ${\bf G}_0$ =  \{ 
\mbox{ {\tt s} $\rightarrow$ {\tt  s}\,{\tt s} },
\mbox{ {\tt s} $\rightarrow$ {\tt a} },
\mbox{ {\tt s} $\rightarrow$ {\tt b} }  \}
and its PCFG version, $\pcfg_0$ =  \{ 
\mbox{ {\tt s} $\rightarrow$ {\tt  s}\,{\tt s} : 0.4},
\mbox{ {\tt s} $\rightarrow$ {\tt a} : 0.3},
\mbox{ {\tt s} $\rightarrow$ {\tt b} : 0.3}  \}.
Here ``{\tt s}'' is  a start symbol in ${\bf G}_0$ and  ``{\tt a}'' and ``{\tt
  b}'' are  terminals.  \mbox{ {\tt s}  $\rightarrow$ {\tt s}\,{\tt  s} : 0.4}
says that the rule \mbox{ {\tt s} $\rightarrow$ {\tt s}\,{\tt s} } is selected
with probability 0.4 when ``{\tt s}'' is expanded in a sentence derivation.

\begin{figure*}[t]
\rule{\textwidth}{0.25mm}\\ [-1em]
\begin{verbatim}
values(s,[[s,s],[a],[b]]).
:- set_sw(s,[0.4,0.3,0.3]).

pre_pcfg(L):- pre_pcfg([s],L,[]).          --(1) % L is a prefix
pre_pcfg([A|R],L0,L2):-                    --(2) % L0 is ground when called
   ( values(A,_)-> msw(A,RHS),             --(3) % if A is a nonterminal
       pre_pcfg(RHS,L0,L1)                 --(4) % select rule A->RHS
   ;  L0=[A|L1] ),                         --(5) % else consume A in L0
   ( L1=[] -> L2=[]                        --(6) % (pseudo) success
   ; pre_pcfg(R,L1,L2) ).                  --(7) % recursion
pre_pcfg([],L1,L1).                        --(8) % termination
\end{verbatim}
\rule{\textwidth}{0.25mm}\\ 
\caption{Prefix PCFG parser $\db_0$}
\label{fig:prism:prog0}
\end{figure*}

A PRISM program  $\db_0$ in Fig.~\ref{fig:prism:prog0} is a  prefix parser for
$\pcfg_0$.  It is a slight modification  of a standard top-down CFG parser and
parses prefixes  accepted by  ${\bf G}_0$  such as ``{\tt  a}'' (as  list {\tt
  [a]}).  The  only difference is that  it can have {\em  pseudo success\/} at
line {\tt  (6)}, i.e.\ it immediately  terminates with success as  soon as the
input prefix is consumed even when  there remain some nonterminals in {\tt R}
at line {\tt (2)}\footnote{
This is justifiable because as we  assume that every nonterminal is useful, we
can prove that every nonterminal derives a terminal string with probability 1.
}.

A {\tt  values/2} declaration  {\tt values(s,[[s,s],[a],[b]])} in  the program
introduces  three {\tt msw}  atoms: {\tt  msw(s,[s,s])}, {\tt  msw(s,[a])} and
{\tt  msw(s,[b])}.  The  next command  {\tt :-  set\_sw(s,[0.4,0.3,0.3])} sets
$\theta_{{\tt s}\rightarrow  {\tt s  s}}$ = $P_{\db_0}({\tt  msw(s,[s,s])})$ =
0.4, $\theta_{{\tt  s}\rightarrow {\tt a}}$ =  $P_{\db_0}({\tt msw(s,[a])})$ =
0.3 and $\theta_{{\tt s}\rightarrow  {\tt b}}$ = $P_{\db_0}({\tt msw(s,[b])})$
= 0.3 respectively when the program is loaded.  Thus $\pcfg_0$ is encoded.  We
point out  that $\db_0$ is general,  applicable to any PCFG  just by replacing
the {\tt values/2} declaration and {\tt set\_sw} command with appropriate ones
that encode a given PCFG.

\subsection{Tracing linear-tabling}
\label{subsec:linear-tabling}

Once a program $\db$ and a top-goal $G$ are given for which the probability is
computed,  the next  task is  to  construct an  explanation graph  for $G$  by
searching for  all proofs while  tabling answers and recording  their defining
clauses in  the external memory area.   Using a simple  example, we illustrate
how tabled search for all proofs  is done by {\em linear-tabling with the lazy
  strategy\/} in B-Prolog~\cite{Zhou08} which has been a standard platform for
PRISM.

One of the  unique features of linear-tabling is  to iterate exhaustive tabled
search to obtain all answers when there are looping subgoals\footnote{
In this subsection, the terms ``subgoal'' and ``goal'' are used synonymously.
}.  More precisely, if  a call {\tt :-(A,...)} on a path  of an SLD-tree has a
sub-path   containing   sub-derivation   $\mbox{\tt   :-A}\Rightarrow   \cdots
\Rightarrow  \mbox{\tt  :-(A',...)}$  such  that  {\tt A}  and  {\tt  A'}  are
variants,  {\tt  A}  and  {\tt  A'} are  called  {\em  interdependent  looping
  subgoal\/}s.  Interdependent looping  subgoals constitute  a cluster.
The first looping subgoal {\tt A}  in the cluster that appears in the SLD-tree
is said to be a {\em top-most looping subgoal\/}~\cite{Zhou08}.
%

Although  a looping  subgoal causes  an  infinite loop,  it can  be proved  by
non-looping paths in  the SLD tree. We preserve  answers from such non-looping
paths in  the table  and make  them available as  tabled answers  when looping
subgoals are called.   Linear-tabling with the lazy strategy  tries to collect
all answers  for looping subgoals by  iterating {\em round\/}s  for a top-most
looping subgoal.  In a round exhaustive search by backtracking is performed to
generate all  proofs of  the top-most looping  subgoal while  consuming tabled
answers and adding  newly found answers to the table.   The lazy strategy does
not allow other subgoals outside the looping path to consume tabled answers of
the top-most looping subgoal until no more round generates new answers for the
looping subgoals~\cite{Zhou08}.\\

\begin{figure*}[t]
\rule{\textwidth}{0.25mm}\\ [-1em]
\begin{verbatim}
:- pre_pcfg([a])
  :- pre_pcfg([s],[a],[])
    :- msw(s,RHS1),pre_pcfg(RHS1,[a],L1)..
      (first round)
      :- pre_pcfg([s,s],[a],L1)..    % RHS1=[s,s], top-most looping subgoal TG
        :- msw(s,RHS2),pre_pcfg(RHS2,[a],L1)..
          :- pre_pcfg([s,s],[a],L1)..  
             % RHS2=[s,s], fails at (4) as no anwser available in the table
             % for :- pre_pcfg([s,s],[a],L1) yet.
          :- pre_pcfg([a],[a],L1)..
             % RHS2=[a], executes (5) and succeeds at (6) with L1=[], resulting
             % in tabled answers pre_pcfg([a],[a],[]) and pre_pcfg([s,s],[a],[])
             % with defining clauses
             %   pre_pcfg([a],[a],[]) and
             %   pre_pcfg([s,s],[a],[]) <= pre_pcfg([a],[a],[]) & msw(s,[a])
          :- pre_pcfg([b],[a],L1)..
             % RHS2=[b], fails at (5)
      (second round)
      :- pre_pcfg([s,s],[a],L1)..    % RHS1=[s,s], top-most looping subgoal TG
        :- msw(s,RHS2),pre_pcfg(RHS2,[a],L1)..
          :- pre_pcfg([s,s],[a],L1)..
             % RHS2=[s,s], this time can consume the tabled answer 
             % pre_pcfg([s,s],[a],[]) in the previous round and
             % succeeds with L1=[], giving pseudo success at (6) and
             % a defining clause
             %   pre_pcfg([s,s],[a],[]) <= pre_pcfg([s,s],[a],[]) & msw(s,[s,s])
             % no further answer generated
          :- pre_pcfg([a],[a],L1)..  % RHS2=[a], succeeds with L1=[]
          :- pre_pcfg([b],[a],L1)..  % RHS2=[b], fails at (5)
      (third round)
      :- pre_pcfg([s,s],[a],L1)..
             % yields no new answer, so :- pre_pcfg([s,s],[a],L1)
             % is completely evaluated with one answer pre_pcfg([s,s],[a],[])
             % which results in the success of :- pre_pcfg([s],[a],[])
             % giving a defining clause
             %   pre_pcfg([s],[a],[]) <= pre_pcfg([s,s],[a],[]) & msw(s,[s,s])
      :- pre_pcfg([a],[a],L1)..
             % RHS1=[a], succeeds with L1=[], results in the success of
             % :- pre_pcfg([s],[a],[]) giving a defining clause
             %   pre_pcfg([s],[a],[]) <= pre_pcfg([a],[a],[]) & msw(s,[a])
      :- pre_pcfg([b],[a],L1)..      % RHS1=[b], fails at (5)
      ...
\end{verbatim}
\rule{\textwidth}{0.25mm}\\ 
\caption{A sketch of SLD tree(s) for {\tt :- pre\_pcfg([a])} }
\label{fig:prism:trace0}
\end{figure*}

Fig.~\ref{fig:prism:trace0}  sketches  tabled  search  for  all  proofs  of  a
top-goal  $G_0$ =  {\tt  pre\_pcfg([a])} w.r.t.\  $\db_0$  while tabling  {\tt
  pre\_pcfg/1}   and  {\tt   pre\_pcfg/3}.   Here   {\tt   (1)},{\tt  (2)},...
correspond to line numbers in Fig.~\ref{fig:prism:prog0}\footnote{
Recall that as we explained in Subsection~\ref{subsec:tabling}, the program we
actually use in  the tabled search is a translated program  $\db'_0$ but as it
behaves exactly the same way as  the original one except that defining clauses
are recorded  in the  external memory  area, we explain  the tabled  search in
terms of $\db_0$ for intuitiveness and conciseness.
}.  Although  Fig.~\ref{fig:prism:trace0}  is  self-explanatory, we  add  some
comments.  The top-call to $G_0$  = {\verb!pre_pcfg([a])!}  leads to a call to
a  subgoal   {\tt  TG}  =  {\verb!pre_pcfg([s,s],[a],L1)!}   via   a  call  to
{\verb!pre_pcfg([s],[a],[])!} in  which {\tt values(s,\_)} is  tested true and
{\tt msw(s,RHS1)} is executed at line {\tt (3)}.  Since {\tt TG} is a top-most
looping subgoal, exhaustive tabled search is iterated on {\tt TG} until no new
answer is obtained.

In the  first round, a  proof by a  branch in the  SLD tree specified  by {\tt
  RHS2} =  {\tt [a]}  succeeds with  {\tt L1} =  {\tt []}  and gives  a tabled
answer {\tt  pre\_pcfg([s,s],[a],[])} for which a defining  clause is recorded
in the external  memory area.  In the second round a  branch specified by {\tt
  RHS2} =  {\tt [s,s]}  succeeds as well  using the previously  tabled answer,
giving   a  new   defining  clause   {\tt  pre\_pcfg([s,s],[a],[])   <=}  {\tt
  pre\_pcfg([s,s],[a],[]) \& msw(s,[s,s])}.  The  third round generates no new
answer and the call to {\tt TG} terminates successfully.  {\tt TG} now exports
its tabled answer {\verb!pre_pcfg([s,s],[a],[])!}   which leads to the success
of the top-call.

After  all  proof  search  is  done, PRISM  constructs  an  explanation  graph
$\expl(G_0)$ by  tracing tabled answers  starting from $G_0$  while collecting
defining clauses recorded in the  external memory area.  When PRISM encounters
looping subgoals in the body of  a defining clause, it looks at the PRISM-flag
{\tt  error\_on\_cycle} and if  the value  is ``{\tt  off}'', these  goals are
treated as  succeeded normally and as  a result a cyclic  explanation graph is
obtained.

\subsection{Computing prefix probability: an example}
\label{subsec:exampleprefix}
In  this subsection,  using the  continuing example,  we  describe probability
computation in cyclic explanation graphs.

An  explanation  graph  for  $G_0$  = {\verb!pre_pcfg([a])!}  is  obtained  by
executing a command {\verb!?- probf(pre_pcfg([a]))!}\footnote{
{\tt probf/1} is a built-in predicate in PRISM and
{\tt probf($G$)} displays the explanation graph of $G$.
} w.r.t.\  $\db_0$.  The command initiates exhaustive  tabled search described
in Subsection~\ref{subsec:linear-tabling}  and generates an  explanation graph
shown  in  Fig.~\ref{fig:prism:probf0}   consisting  of  defining  clauses  in
Fig.~\ref{fig:prism:trace0}.

\begin{figure*}[h]
\rule{\textwidth}{0.25mm}\\ [-1em]
\begin{verbatim}
pre_pcfg([a]) <=> pre_pcfg([s],[a],[])
pre_pcfg([s],[a],[]) <=>
   pre_pcfg([s,s],[a],[]) & msw(s,[s,s]) v pre_pcfg([a],[a],[]) & msw(s,[a])
pre_pcfg([s,s],[a],[]) <=>
   pre_pcfg([a],[a],[]) & msw(s,[a]) v pre_pcfg([s,s],[a],[]) & msw(s,[s,s])
pre_pcfg([a],[a],[])
\end{verbatim}
\rule{\textwidth}{0.25mm}\\ 
\caption{Explanation graph for prefix ``{\tt a}''}
\label{fig:prism:probf0} %
\end{figure*}

As  can be seen,  the top-most  looping subgoal  {\tt pre\_pcfg([s,s],[a],[])}
calls    itself.    We    convert    the   cyclic    explanation   graph    in
Fig.~\ref{fig:prism:probf0} to the  corresponding set of probability equations
shown   in   Fig.~\ref{fig:prism:probeq0}.    Here  we   used   abbreviations:
$\theta_{{\tt s}  \rightarrow {\tt s s}} =  P_{\db_0}({\tt msw(s,[s,s])})$ and
$\theta_{{\tt s} \rightarrow {\tt a}} = P_{\db_0}({\tt msw(s,[a])})$.

\begin{figure*}[h]
\rule{\textwidth}{0.25mm}\\ [1em]
\hspace*{1em}
$
\begin{array}{rclcl}
       P({\tt pre\_pcfg([a])})   & = & {\tt X} &  = & {\tt Y} \\
P({\tt pre\_pcfg([s],[a],[])})   & = & {\tt Y} & = & 
   {\tt Z}\cdot \theta_{{\tt s}\rightarrow {\tt s s}}
      + {\tt W}\cdot \theta_{{\tt s}\rightarrow {\tt a}}  \\
P({\tt pre\_pcfg([s,s],[a],[])}) & = & {\tt Z} & = & 
   {\tt W}\cdot  \theta_{{\tt s}\rightarrow {\tt a}}
      + {\tt Z}\cdot \theta_{{\tt s}\rightarrow {\tt s s}} \\
P({\tt pre\_pcfg([a],[a],[])})   & = & {\tt W} & = & \mbox{1}
\end{array}
$ \\[1em]
\rule{\textwidth}{0.25mm}\\
\caption{Probability equations for prefix ``{\tt a}''}
\label{fig:prism:probeq0} %
\end{figure*}

Although   we    know   that   the    set   of   probability    equations   in
Fig.~\ref{fig:prism:probeq0}  has a solution  (see Theorem~\ref{th-0}),  we do
not know their actual values.  To know their actual values, we need to compute
them  by solving  the equations.   Fortunately,  equations are  linear in  the
P-variables {\tt  X}, {\tt Y},  {\tt Z} and  {\tt W} and easily  solvable.  By
substituting $\theta_{{\tt  s}\rightarrow {\tt s s}}$ =  0.4 and $\theta_{{\tt
    s}\rightarrow  {\tt a}}$  = 0.3  for the  equations and  solving  them, we
obtain ${\tt X} = {\tt Y} = {\tt Z}$ = 0.5, and ${\tt W}$ = 1\footnote{
${\tt W}$ = 1 because  {\tt pre\_pcfg([a],[a],[])} is logically proved without
  involving {\tt msw\/}s.
}  respectively.   Hence  the  prefix  probability  of  ``{\tt  a}'',  $P({\tt
  pre\_pcfg([a])})$, is 0.5.  Note that this prefix probability is greater than
the probability of ``{\tt a}'' as a sentence which is 0.3. This is because the
prefix probability  of ``{\tt a}'' is  the sum of the  probability of sentence
``{\tt  a}''  {\em  and\/}  the  probabilities of  infinitely  many  sentences
extending ``{\tt a}''.

By looking at the set of probability equations in Fig.~\ref{fig:prism:probeq0}
more  closely,  we  can  understand  the  way  our  approach  computes  prefix
probability   in  PCFGs.    For  example,   consider  ${\tt   Z}   =  P({\tt
  pre\_pcfg([s,s],[a],[])})$  and  the  equation   ${\tt  Z}  =  {\tt  W}\cdot
\theta_{{\tt  s}  \rightarrow  {\tt  a}}  +  {\tt  Z}  \cdot  \theta_{{\tt  s}
\rightarrow  {\tt s  s}} $.
We can expand the solution {\tt Z} into an infinite series:
\begin{eqnarray*}
{\tt Z}
 & = &
       \frac{ 1 }{1 - \theta_{{\tt s}\rightarrow {\tt s s}} }
           {\tt W}\cdot  \theta_{{\tt s}\rightarrow {\tt a}} 
 \;\; = \;\; (1 + \theta_{{\tt s}\rightarrow {\tt s s}} + \theta_{{\tt s}\rightarrow {\tt s s}}^2
         + \cdots ){\tt W}\cdot  \theta_{{\tt s}\rightarrow {\tt a}}
\end{eqnarray*}

It is  easy to see that  this series represents the  probability of infinitely
many  leftmost derivations  of  prefix ``{\tt  a}''  from nonterminals  ``{\tt
  s\,s}'' by partitioning the derivations  based on the number of applications
of rule ${\tt  s}\rightarrow {\tt s\,s}$ to derive ``{\tt  a}'', i.e.\ $1$ for
no application (${\tt s\, s}\,  {\Rightarrow}_{ {\tt s}\rightarrow {\tt a} }\,
{\tt a\,  s}$), $\theta_{{\tt s}\rightarrow {\tt  s s}}$ for one  ( ${\tt s\,
  s}\, {\Rightarrow}_{  {\tt s}\rightarrow {\tt s  s} }\, {\tt  s\, s\, s\,}\,
{\Rightarrow}_{  {\tt  s}\rightarrow {\tt  a}  }\,  {\tt  a\,s\,s }$)  and  so
on\footnote{
Here we  use $\alpha {\Rightarrow} \beta$  (resp. $\alpha \derive  \beta$ ) to
indicate $\beta$ is derived from  $\alpha$ by one step derivation (resp.  zero
or  more steps  derivation) using  CFG  rules.  Also  recall here  that it  is
assumed that  PCFGs are consistent. So  the sum of  probabilities of sentences
derived from ``{\tt s}'' is 1.   Consequently for example we may safely ignore
{\tt s} in ``{\tt a s}''  when computing the probability of prefix ``{\tt a}''
derived from ``{\tt a s}''.
}.\\

\subsection{Properties of explanation graphs generated by a prefix parser}
\label{subsec:cyclicpcfg}
We here  examine properties of cyclic  explanation graphs.  Let {\bf  PG} be a
PCFG and  {\bf G'}  its underlying  CFG, i.e.\ the  CFG obtained  by removing
probabilities from {\bf PG}.  Throughout this subsection we use $\db_{\bf PG}$
for a  prefix parser  for {\bf  PG} obtained by  replacing the  {\tt values/2}
declaration in  $\db_0$ in Fig.~\ref{fig:prism:prog0} with  an appropriate set
of {\tt values/2}  declarations encoding {\bf PG}.  In  what follows, we first
prove  a  necessary and  sufficient  condition  under  which a  prefix  parser
$\db_{\bf  PG}$  generates cyclic  explanation  graphs.   We  then prove  that
$\db_{\bf  PG}$ always  generates  a  system of  linear  equations for  prefix
probabilities.  Finally we prove that  the linear system is solvable by matrix
operation under our assumptions on PCFGs.

\begin{theorem}
\label{th-1}
Let $G_{\ell}$ = {\tt pre\_pcfg($\ell$)} be  a goal for a prefix $\ell$ = {\tt
  [$w_1,\ldots, w_N$]} in {\bf G'} and $\expl(G_{\ell})$ an explanation graph
for  $G_{\ell}$ generated  by  $\db_{\bf  PG}$. Suppose  there  is no  useless
nonterminal  in  {\bf G'}.  Then  there  exists  a cyclic  explanation  graph
$\expl(G_{\ell})$  if-and-only-if the  left-corner  relation of  {\bf G'}  is
cyclic.
\end{theorem}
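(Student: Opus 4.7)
The plan is to argue both directions separately, relating cycles in the explanation graph to cycles in the left-corner relation of {\bf G'}.

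For the $(\Rightarrow)$ direction, the key step is to exploit the shape of the clauses defining {\tt pre\_pcfg/3} in Fig.~\ref{fig:prism:prog0}. The body of {\tt pre\_pcfg([A|R],L0,L2)} makes at most two recursive {\tt pre\_pcfg} calls: an \emph{expansion} call {\tt pre\_pcfg(RHS,L0,L1)} at line (4), which preserves the second argument, and a \emph{continuation} call {\tt pre\_pcfg(R,L1,L2)} at line (7), which is guarded by {\tt L1$\neq$[]} and, because {\bf G'} has no $\epsilon$-rules, always has $L_1$ strictly shorter than $L_0$. Any cycle $E_0 \to E_1 \to \cdots \to E_m = E_0$ in the ancestor relation of $\expl(G_\ell)$ must preserve the length of the second argument, so only expansion edges can occur. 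Each such edge $E_i \to E_{i+1}$ applies a rule $A_i \to [A_{i+1}|R_{i+1}]$ to the head of the first-argument list, placing $A_i$ and $A_{i+1}$ in the direct left-corner relation; composing the cycle gives $A_0 \to_L A_0$, so the left-corner relation is cyclic.

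For the $(\Leftarrow)$ direction, the plan is to start from a left-corner cycle $X \to_L X$ witnessed by a chain of rules $X \to Y_1 \beta_1,\ Y_1 \to Y_2 \beta_2,\ \ldots,\ Y_{k-1} \to X \beta_k$, and exhibit a prefix $\ell$ that forces the tabled search to traverse this chain twice. Since no nonterminal is useless, there is a leftmost derivation $s \derive u X v$ with $u$ a terminal string and a nonempty terminal string $w$ derivable from $X$; I take $\ell = u w$. During tabled search for $G_\ell$, after consuming $u$ the parser reaches a call {\tt pre\_pcfg([X|$v'$],[$w$],\_)}, and expanding $X$ along the left-corner chain instantiates successive tabled calls with first arguments $[Y_1,\beta_1], [Y_2,\beta_2], \ldots, [X,\beta_k]$, all sharing the same second argument $[w]$. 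Re-entering the chain from the last call yields a variant of the earlier tabled call {\tt pre\_pcfg([$Y_1$,$\beta_1$],[$w$],\_)}, so linear tabling registers a cyclic defining clause, making $\expl(G_\ell)$ cyclic.

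The main obstacle I expect is verifying that the cyclic edges really appear in $\expl(G_\ell)$ rather than being pruned as failures: a defining clause is recorded only when its body succeeds, and in the first round of linear tabling no tabled answer is yet available for the looping subgoal. I would address this by choosing $w$ to be a terminal that $X$ (and hence every $Y_i$ via the left-corner chain) can produce as its first symbol through a non-cyclic derivation, and then arguing that each intermediate tabled call in the cycle admits a non-cyclic proof combining a terminating rule for $Y_i$ with the pseudo-success branch at line (6); this populates the tables in round 1, so subsequent rounds complete the cyclic paths through stored answers and thereby record the cyclic defining clauses.
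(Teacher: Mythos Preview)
Your proposal is essentially correct and follows the same two-direction strategy as the paper, with only minor differences in presentation and in the choice of witness prefix for the $(\Leftarrow)$ direction.

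For $(\Rightarrow)$, the paper argues more tersely: it observes that if some defined goal {\tt pre\_pcfg([$a$|$\beta$],$\ell_0$,$\ell_2$)} is its own descendant then the SLD derivation along the cycle preserves $\ell_0$, and from this it reads off a leftmost CFG derivation $a\delta \derive a\delta'$. Your expansion/continuation edge analysis is a more explicit unpacking of exactly this point---continuation edges strictly shrink the second argument (this is the same fact the paper later uses in Lemma~\ref{lemma-3}), so a cycle must consist entirely of expansion edges, each of which witnesses a direct left-corner step. The two arguments are the same in substance.

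For $(\Leftarrow)$, the paper's construction is slightly different from yours and sidesteps the obstacle you identify. Rather than taking an arbitrary nonempty terminal string $w$ derivable from the cyclic nonterminal and then arguing post hoc that the intermediate tabled calls succeed, the paper chooses $\ell_0$ to be a prefix derived from $a$ whose partial parse tree has $a$ at the root and \emph{no occurrence of $a$ below the root}. This choice guarantees directly that {\tt pre\_pcfg([$a$|$\beta$],$\ell_0$,[])} has a non-looping proof (namely the one corresponding to that partial parse tree, terminated by pseudo success), so the tabled answer is produced in the first round and the cyclic defining clause is recorded thereafter. Your construction reaches the same conclusion but needs the extra reasoning in your third paragraph; the paper's more specific choice of prefix makes that reasoning unnecessary.
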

\noindent   
\begin{proof}
Suppose   $\expl(G_{\ell})$  is   cyclic.    Then  some   defined  goal   {\tt
  pre\_pcfg([$a$|$\beta$],$\ell_0$,$\ell_2$)} with  a nonterminal ``$a$'' must
have itself as  a descendant in $\expl(G_{\ell})$ where  $\ell_0$ and $\ell_2$
are sublists  of ${\ell}$.  So  an SLD derivation exists  from {\verb!:-!}{\tt
  pre\_pcfg([$a$|$\beta$],$\ell_0$,L2),K}  to  its descendant  {\verb!:-!}{\tt
  pre\_pcfg([$a$|$\beta$],$\ell_0$,L2'),K'}  in  which  the list  $\ell_0$  is
preserved.  Consequently there is a corresponding leftmost derivation ${\tt s}
\derive {a}\delta \derive {a}\delta'$ by  {\bf G'}, the underlying CFG of {\bf
  PG}.  So the left-corner relation is cyclic.

Conversely  suppose the  left-corner relation  of {\bf  G'} is  cyclic.  Then
there is a nonterminal ``$a$'' such that $a \rightarrow_{L}a$.  As there is no
useless nonterminal by our assumption, there is a leftmost derivation starting
from  ``{\tt   s}''  such  that  ${\tt  s}   \derive  {\gamma}a\delta  \derive
{\gamma}a\delta'  \derive w_1\ldots w_N$  for some  sentence $w_1,\ldots,w_N$.
In what  follows, for  simplicity we  assume that $\gamma$  is empty  (but the
generalization is  straightforward).  Let ${\ell_0}  = w_1,\ldots,w_j$ ($j\leq
N$) be a prefix derived from $a$  whose partial parse tree\footnote{
A partial  parse tree  is an  incomplete parse tree  whose leaves  may contain
nonterminals.
} has $a$ as  the root and no $a$ occurs below the root  $a$.  Then it is easy
to  see  that the  tabled  search for  all  proofs  of $G_{\ell_0}$  generates
$\expl(G_{\ell_0})$          containing          a          goal          {\tt
  pre\_pcfg([$a$|$\beta$],$\ell_0$,[])}  which is an  ancestor of  itself.  So
$\expl(G_{\ell_0})$ is cyclic.
\end{proof}

Let  $\expl(G)$ be  an  explanation  graph for  $G_{\ell}$.   We introduce  an
equivalence relation $A \equiv B$  over defined goals appearing in $\expl(G)$:
$A \equiv B$ if-and-only-if $A=B$ or $A$ is an ancestor of $B$ and vice versa.
We partition the set of defined goals into equivalence classes $[A]_{\equiv}$.
Each  $[A]_{\equiv}$  is  called  an  {\em  SCC\/}  ({\em  strongly  connected
  component}).   We  say  that  {\em  a defining  formula  $H  \Leftrightarrow
  \alpha_1 \vee\ldots\vee \alpha_M$  is linear\/} if there is  no $\alpha_i$ =
$C_1 \wedge\ldots\wedge C_m \wedge \mswz_1 \wedge\ldots\wedge \mswz_n$ ($1\leq
i \leq h$, $0 \leq m,n$) that  has two defined goals, $C_j$ and $C_k$ ($j \neq
k$), belonging to  the same SCC.  Also we say {\em  $\expl(G)$ is linear\/} if
every defining formula in $\expl(G)$ is linear.

\begin{lemma}
\label{lemma-3}
No two  defined goals in the  body of a defining  formula in $\expl(G_{\ell})$
belong to the same SCC.
\end{lemma}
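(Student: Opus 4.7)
The plan is to exploit the fact that in $\db_{\bf PG}$ the only clause whose body contains more than one defined (i.e.\ tabled \mbox{\tt pre\_pcfg}) goal is the main recursive clause {\tt (2)} of Fig.~\ref{fig:prism:prog0}; clauses {\tt (1)} and {\tt (8)} trivially satisfy the statement. So I only need to handle a single disjunct coming from {\tt (2)}, which after resolving the if-then-elses has the form $\mbox{\tt msw}(A,\mbox{\tt RHS})\wedge C_1\wedge C_2$ with $C_1 = \mbox{\tt pre\_pcfg}(\mbox{\tt RHS},L_0,L_1)$ and $C_2 = \mbox{\tt pre\_pcfg}(R,L_1,L_2)$, where $A$ is a nonterminal, \mbox{\tt RHS} is nonempty by the no-epsilon assumption on {\bf PG}, and $L_1\neq[\,]$ because otherwise the ``{\tt L1=[]->L2=[]}'' branch would fire and $C_2$ would not appear.

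The first step will be to establish an ``input monotonicity'' property of $\expl(G_{\ell})$: for every parent--child edge, the second argument of the child is a (possibly non-strict) suffix of the second argument of the parent. This is immediate from inspecting each clause of $\db_{\bf PG}$. By transitive closure, if $A$ is an ancestor of $B$ then the input of $B$ is a suffix of the input of $A$. In particular, any two defined goals lying in the same SCC must have identical inputs, each being a suffix of the other.

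The second, and harder, step will be a ``strict-decrease'' lemma: for any defined goal $\mbox{\tt pre\_pcfg}(\rho,\ell_0,\ell_1)$ with $\rho$, $\ell_0$ and $\ell_1$ all nonempty, $\ell_1$ is a \emph{proper} suffix of $\ell_0$. I will prove this by taking a minimal SLD-proof and writing $\rho = [A|R]$. If $A$ is a terminal, the main clause forces $\ell_0 = [A|\ell_{\mathrm{mid}}]$, making $\ell_{\mathrm{mid}}$ a proper suffix of $\ell_0$, and the auxiliary fact that an output is always a suffix of its input (an easy induction on proof depth) propagates the properness to $\ell_1$. If $A$ is a nonterminal and we had $\ell_1 = \ell_0$, the intermediate list $\ell_{\mathrm{mid}}$ output by $\mbox{\tt pre\_pcfg}(\mbox{\tt RHS}_A,\ell_0,\ell_{\mathrm{mid}})$ would be squeezed between two suffix inclusions ($\ell_{\mathrm{mid}}$ is a suffix of $\ell_0$, and $\ell_1=\ell_0$ is a suffix of $\ell_{\mathrm{mid}}$), forcing $\ell_{\mathrm{mid}} = \ell_0$; this yields a strictly smaller proof of the same pathological shape $\mbox{\tt pre\_pcfg}(\mbox{\tt RHS}_A,\ell_0,\ell_0)$ with $\mbox{\tt RHS}_A$ nonempty, contradicting minimality.

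Combining these pieces finishes the lemma: were $C_1$ and $C_2$ in a common SCC, input monotonicity would force $L_0 = L_1$, while the strict-decrease lemma applied to $C_1$ (whose \mbox{\tt RHS}, $L_0$ and $L_1$ are all nonempty --- $L_0\neq[\,]$ because the suffix $L_1$ is nonempty) yields $L_1 \neq L_0$, a contradiction. The hardest part will be the strict-decrease lemma: naive induction on $|\rho|$ fails because nonterminal expansion can enlarge $\rho$, so the induction has to be carried out on the SLD-proof tree itself, with input monotonicity used to pin $\ell_{\mathrm{mid}}$ to $\ell_0$ before the inductive hypothesis can be applied.
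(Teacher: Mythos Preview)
Your proposal is correct and follows essentially the same route as the paper: both arguments reduce to the single two-subgoal clause, use that the second argument never grows along the ancestor relation, and derive a contradiction from the fact that the first subgoal strictly consumes input while membership in a common SCC would force the inputs to coincide. The paper simply asserts ``since {\tt pre\_pcfg($\alpha$,$\ell_0$,$\ell_1$)} is a proved goal, $\ell_1$ is shorter than $\ell_0$'' and ``since {\tt pre\_pcfg($\beta$,$\ell_1$,$\ell_2$)} is an ancestor \ldots\ $\ell_0$ is identical to or a part of $\ell_1$'', whereas you spell these two facts out as an explicit input-monotonicity observation and a strict-decrease lemma proved by a minimal-counterexample (equivalently, induction on SLD-proof size) argument; your version is more detailed but not a different method.
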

\noindent   
\begin{proof}
Let $H \Leftrightarrow \alpha_1 \vee\ldots\vee \alpha_M$ be a defining formula
in  $\expl(G_{\ell})$.  Suppose  some  $\alpha_i$ contains  two defined  goals
belonging to the same  SCC.  Looking at $\db_0$ in Fig.~\ref{fig:prism:prog0},
we know  that the  only possibility is  such that $H  \Leftrightarrow \alpha_1
\vee\ldots\vee  \alpha_M$ is a  ground instantiation  of the  first (compound)
clause about {\tt pre\_pcfg/3}:

\begin{eqnarray}
\lefteqn{ \mbox{\tt pre\_pcfg([$a$|$\beta$],$\ell_0$,$\ell_2$)} \mbox{{\tt:-}}  } \nonumber \\
& &  \mbox{\tt  msw($a$,$\alpha$),pre\_pcfg($\alpha$,$\ell_0$,$\ell_1$),pre\_pcfg($\beta$,$\ell_1$,$\ell_2$)}
       \label{fig:groundcl}
\end{eqnarray}

\noindent
and the  two defined  goals, {\tt pre\_pcfg($\alpha$,$\ell_0$,$\ell_1$)  } and
{\tt  pre\_pcfg($\beta$,$\ell_1$,$\ell_2$)}, are  in the  same  SCC.  However,
since {\tt pre\_pcfg($\alpha$,$\ell_0$,$\ell_1$) }  is a proved goal, $\ell_1$
is    shorter   than    $\ell_0$.    On    the   other    hand    since   {\tt
  pre\_pcfg($\beta$,$\ell_1$,$\ell_2$)    }   is    an   ancestor    of   {\tt
  pre\_pcfg($\alpha$,$\ell_0$,$\ell_1$)  } in  $\expl(G_{\ell})$  because they
belong to the  same SCC by assumption,  $\ell_0$ is identical to or  a part of
$\ell_1$,  and  hence   $\ell_0$  is  equal  to  or   shorter  than  $\ell_1$.
Contradiction.    Therefore  there   is  no   such  defining   formula.  Hence
$\expl(G_{\ell})$ is linear.
\end{proof}

\begin{theorem}
\label{th-2}
Let $\expl(G_{\ell})$ be an explanation graph for a prefix $\ell$ parsed by
$\db_{\bf PG}$. $\expl(G_{\ell})$ is linear.
\end{theorem}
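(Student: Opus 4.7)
The plan is to derive Theorem~\ref{th-2} directly from Lemma~\ref{lemma-3}, since the theorem is essentially a repackaging of that lemma in the language of linearity. Recall the definitions introduced just above Lemma~\ref{lemma-3}: $\expl(G_\ell)$ is linear iff every defining formula $H \Leftrightarrow \alpha_1 \vee \cdots \vee \alpha_M$ in it is linear, and a defining formula is linear iff no single disjunct $\alpha_i = C_1 \wedge \cdots \wedge C_m \wedge \mswz_1 \wedge \cdots \wedge \mswz_n$ contains two distinct defined goals $C_j, C_k$ belonging to the same SCC. So it suffices to verify this per-disjunct condition.

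Each disjunct $\alpha_i$ is, by construction of the explanation graph, the body of a single ground-instantiated clause of $\db_{\bf PG}$ collected during tabled proof search for $G_\ell$. Inspecting $\db_{\bf PG}$ (the program of Fig.~\ref{fig:prism:prog0} with the {\tt values/2} declaration replaced to encode \textbf{PG}), the only clause whose body can contain more than one defined goal is the compound clause~(\ref{fig:groundcl}) defining {\tt pre\_pcfg/3}, whose body contains the two recursive calls {\tt pre\_pcfg($\alpha$,$\ell_0$,$\ell_1$)} and {\tt pre\_pcfg($\beta$,$\ell_1$,$\ell_2$)}. Lemma~\ref{lemma-3} has already ruled out that any such pair of sibling subgoals lies in the same SCC in any ground instantiation appearing in $\expl(G_\ell)$: its proof observes that a shared SCC would force $\ell_1$ to be strictly shorter than $\ell_0$ (since the first subgoal is proved, consuming at least one terminal) and simultaneously force $\ell_0$ to be a suffix of $\ell_1$ (since the second subgoal must be an ancestor of the first in the SCC), a contradiction.

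Combining these two observations, every disjunct $\alpha_i$ in every defining formula of $\expl(G_\ell)$ satisfies the linearity condition, so every defining formula is linear and therefore $\expl(G_\ell)$ is linear. There is no substantive obstacle beyond Lemma~\ref{lemma-3}; the only care needed is to note that the lemma's ``body of a defining formula'' really refers to a single ground clause body (hence a single disjunct), which is exactly what the definition of linearity demands.
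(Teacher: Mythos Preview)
Your proposal is correct and matches the paper's own approach: the paper's proof of Theorem~\ref{th-2} is literally ``Immediate from Lemma~\ref{lemma-3}'', and you have simply spelled out why that implication holds. Your closing remark about ``body of a defining formula'' meaning a single disjunct is apt, since the proof of Lemma~\ref{lemma-3} in the paper already argues per-disjunct and even concludes with ``Hence $\expl(G_{\ell})$ is linear''.
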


\noindent   
\begin{proof}
Immediate from Lemma~\ref{lemma-3}.
\end{proof}

We next  introduce a partial  ordering $[A]_{\equiv} \succ  [B]_{\equiv}$ over
SCCs by $[A]_{\equiv} \succ [B]_{\equiv}$ if-and-only-if $A$ is an ancestor of
$B$ but not vice versa in $\expl(G)$.  We then extend this partial ordering to
a  total ordering $[A]_{\equiv}  \succ [B]_{\equiv}$  over SCCs.   Likewise we
partition  P-variables  by  the  equivalence relation:  $P(A)  {\equiv}  P(B)$
if-and-only-if $[A]_{\equiv} =  [B]_{\equiv}$.  We denote by $[P(A)]_{\equiv}$
the  equivalence class  of  P-variables corresponding  to $[A]_{\equiv}$.   By
construction  $[P(A)]_{\equiv}$s are totally  ordered isomorphically  to SCCs;
$[P(A)]_{\equiv}  \succ  [P(B)]_{\equiv}$  if-and-only-if $[A]_{\equiv}  \succ
[B]_{\equiv}$.    In  the  following   we  treat   SCCs  and   P-variables  as
isomorphically    stratified    by    this    total    ordering.     We    use
$\eq([P(A)]_{\equiv})$ to stand for the union of sets of probability equations
for  defined goals  in  $[A]_{\equiv}$.  Notice  that  in the  case of  PCFGs,
$\eq([P(A)]_{\equiv})$ is  a system of linear  equations by Theorem~\ref{th-2}
if  we  consider   P-variables  in  the  lower  strata   as  constants.  Hence
${\eq(G_{\ell})}$ is solvable inductively from lower strata to upper strata.

Now we show that $\eq([P(A)]_{\equiv})$ is always solvable by matrix operation
under  our  assumptions  on  PCFGs.   Let  ``$a$'' be  a  nonterminal  in  the
underlying CFG {\bf  G'} and $A$ a defined  goal in $\expl(G_{\ell})$.  Write
$A$ = {\tt pre\_pcfg([$a$|$\beta$],$\ell_0$,$\ell_2$)}.  Since $A$ is a proved
goal,   $A$   successfully   calls   some   ground  goals   $B_{j}$   =   {\tt
  pre\_pcfg($\alpha_j$,$\ell_{0}$,$\ell_{1j}$)}    and    $C_{j}$    =    {\tt
  pre\_pcfg($\beta$,$\ell_{1j}$,$\ell_2$)}  in   the  clause  body   shown  in
(\ref{fig:groundcl})  where $a  \rightarrow \alpha_j$  is a  CFG rule  in {\bf
  G'}.  By  repeating a similar  proof for Lemma~\ref{lemma-3}, we  can prove
that  the  third goal  $C_{j}$  does not  belong  to  $[A]_{\equiv}$, the  SCC
containing      $A$.       Thus      $[A]_{\equiv}    \succ      [      \mbox{\tt
    pre\_pcfg($\beta$,$\ell_{1j}$,$\ell_2$)}   ]_{\equiv}$.    So  only   some
$B_{j}$s can possibly belong to $[A]_{\equiv}$.

Let   $P(A_1),\ldots,P(A_K)$    be   an   enumeration    of   P-variables   in
$[P(A)]_{\equiv}$.     Introduce    a    column    vector   ${\bf    X}_A    =
(P(A_1),\ldots,P(A_K))^T$.  It  follows from what we discussed  before that we
can write $\eq([P(A)]_{\equiv})$ as a  system of linear equations ${\bf X}_A =
M{\bf X}_A +  {\bf Y}_A$ where $M$  is a $K \times K$  non-negative matrix and
${\bf Y}_A$ is  a non-negative vector whose component is  a sum of P-variables
in the lower strata multiplied by constants.  $M$ is irreducible because every
goal   in  $[A]_{\equiv}$  directly   or  indirectly   calls  every   goal  in
$[A]_{\equiv}$  with positive  probability.  ${\bf  Y}_A$ is  non-zero because
some $A_i$  must have  a proof tree  that only  contains defined goals  in the
lower strata.\\

\begin{theorem}
\label{th-3}
Let {\bf PG} be a consistent PCFG such that there is no epsilon rule and every
production rule has a positive selection probability.  Also let $\db_{\bf PG}$
be a prefix parser for {\bf PG} and $\expl(G_{\ell})$ an explanation graph for
a  prefix  $\ell$.   Suppose  $\eq([P(A)]_{\equiv})$  is a  system  of  linear
equations for  a defined goal $A$ in  $\expl(G_{\ell})$.  Put $[P(A)]_{\equiv}
=\{ P(A_i) \mid 1 \leq i  \leq K \}$ and write $\eq([P(A)]_{\equiv})$ as ${\bf
  X}_A = M{\bf X}_A + {\bf Y}_A$ where ${\bf X}_A = (P(A_1),\ldots,P(A_K))^T$.
It has a unique solution ${\bf X}_A = (I-M)^{-1}{\bf Y}_A$.
\end{theorem}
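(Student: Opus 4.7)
The plan is to reduce the theorem to establishing that the spectral radius $\rho(M)$ of $M$ is strictly less than $1$. Once $\rho(M)<1$, the Neumann series $(I-M)^{-1} = \sum_{k=0}^{\infty} M^k$ converges, $(I-M)$ is invertible, and ${\bf X}_A = (I-M)^{-1}{\bf Y}_A$ is immediately the unique solution of ${\bf X}_A = M{\bf X}_A + {\bf Y}_A$.

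To see why $\rho(M)<1$, I would combine three ingredients already at hand. First, since the paragraph preceding the theorem shows $M$ is non-negative and irreducible, the Perron--Frobenius theorem applies: $\rho(M)$ is an eigenvalue of $M$ and there is a strictly positive left eigenvector $v > 0$ with $v^T M = \rho(M)\, v^T$. Second, the text already argues ${\bf Y}_A$ is non-zero (some $A_i$ has a proof tree using only defined goals of strictly lower strata), so $v^T {\bf Y}_A > 0$. Third, I would process SCCs bottom-up and apply Theorem~\ref{th-0} at the stratum of $[A]_\equiv$: with lower-stratum P-variables frozen at their already-computed fixed-point values, the restriction of the global iteration to $[P(A)]_\equiv$ coincides with ${\bf X}_{A,0} = {\bf 0}$, ${\bf X}_{A,k+1} = M{\bf X}_{A,k} + {\bf Y}_A$, and the iterates telescope to ${\bf X}_{A,n} = \sum_{k=0}^{n-1} M^k {\bf Y}_A$. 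By Lemma~\ref{lemma-2} this sequence is bounded above by ${\bf 1}$.

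Projecting onto the Perron eigenvector then yields
\[
v^T {\bf X}_{A,n} \;=\; \Bigl(\sum_{k=0}^{n-1} \rho(M)^k\Bigr)\, v^T {\bf Y}_A \;\leq\; v^T {\bf 1} \;<\; \infty,
\]
and since $v^T {\bf Y}_A > 0$, the geometric series $\sum_{k \geq 0} \rho(M)^k$ must be finite, forcing $\rho(M) < 1$. With this in hand, $(I-M)^{-1}$ exists and is non-negative, the monotone bounded sequence ${\bf X}_{A,n}$ converges to $(I-M)^{-1}{\bf Y}_A$, and any solution of $(I-M){\bf X}_A = {\bf Y}_A$ coincides with it by invertibility.

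The main obstacle is establishing $\rho(M)<1$: everything else is formal rearrangement, but this step is where the probabilistic content---the consistency of the PCFG, the positivity of selection probabilities, and the presence of some $A_i$ proved through lower strata making ${\bf Y}_A$ genuinely non-zero---is converted into the algebraic fact that $I-M$ is invertible. A secondary technical point is to justify that the bottom-up stratification genuinely allows freezing the lower-stratum values when applying Theorem~\ref{th-0} stratum by stratum; this relies on the total ordering of SCCs defined just before the theorem statement and the linearity of $\eq([P(A)]_\equiv)$ provided by Theorem~\ref{th-2}.
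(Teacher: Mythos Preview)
Your proposal is correct and reaches the same target as the paper---showing $\rho(M)<1$ so that $(I-M)^{-1}$ exists---but by a genuinely different route. The paper first fixes the P-variables at the global least fixed point ${\bf X}^G_\infty$ from Theorem~\ref{th-0}, so that ${\bf X}_A = M{\bf X}_A + {\bf Y}_A$ holds as a numerical identity, and iterates to obtain ${\bf X}_A \geq (I+M+\cdots+M^{k-1}){\bf Y}_A$ and ${\bf X}_A \geq M^k{\bf X}_A$ for all $k$. It then argues by case analysis: if $\rho(M)>1$, the bound $\rho(M)^k \leq \|M^k\|_\infty$ and ${\bf X}_A>0$ force $M^k{\bf X}_A$ to diverge; if $\rho(M)=1$, it invokes a Ces\`aro-type result (that $(I+M+\cdots+M^{k-1})/k$ converges to a strictly positive matrix for irreducible nonnegative $M$ with spectral radius $1$) to force the partial sums to diverge. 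Your Perron--Frobenius projection collapses both cases into a single scalar geometric series and avoids the external Ces\`aro reference, which is cleaner; the paper's route, in exchange, grounds the bound directly in the already-constructed ${\bf X}^G_\infty$ rather than re-running a stratum-local fixed-point argument. One minor inaccuracy in your write-up: the global iteration restricted to $[P(A)]_\equiv$ does \emph{not} literally coincide with the iteration with lower-stratum values frozen at their fixed points (lower strata are still moving during the global iteration). This does not damage your argument, since all you actually use is that the partial sums $\sum_{k=0}^{n-1} M^k{\bf Y}_A$ stay bounded, and that follows either by rerunning the Lemma~\ref{lemma-2} argument with frozen lower values $\leq 1$, or---more directly, and as the paper does---by bounding them above by the $[P(A)]_\equiv$-components of ${\bf X}^G_\infty$.
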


\noindent
\begin{proof}
We prove  that $I-M$ has an inverse  matrix. To prove it,  we assume hereafter
that   P-variables  in   $[P(A)]_{\equiv}$  are   assigned  as   their  values
probabilities  from  ${\bf  X}^G_{\infty}$,  a  solution  for  $\eq(G)$  whose
existence  is guaranteed by   Theorem~\ref{th-0}  and  hence  all  equations  in
$\eq([P(A)]_{\equiv})$ are true.

By  applying ${\bf X}_A  = M{\bf  X}_A +  {\bf Y}_A$  $k$ times  repeatedly to
itself, we  have ${\bf X}_A =  M^{k} {\bf X}_A  + (M^{k-1}+\cdots+I){\bf Y}_A$
for  $k=1,2,\ldots$  Since   $M$,  ${\bf  X}_A$,  and  ${\bf   Y}_A$  are  all
non-negative, we  have ${\bf X}_A  \geq M^{k} {\bf  X}_A$ and ${\bf  X}_A \geq
(M^{k-1}+\cdots+I){\bf  Y}_A$ for every  $k$.  On  the other  hand since  $ \{
(M^{k-1}+\cdots+I){\bf Y}_A  \}_k$ is  a monotonically increasing  sequence of
non-negative  vectors bounded by  ${\bf X}_A$,  it converges  and so  does $\{
M^{k} {\bf X}_A \}_{k}$.

Let $\rho(M)$ be the spectral  radius of $M$\footnote{
$\rho(M) \ddefined \argmax{i} | \lambda_i |$
where the $\lambda_i$'s are the eigenvalues of  $M$.
}. Suppose $\rho(M)>1$.  In  general $\rho(M) \leq { {\parallel{M^k}\parallel}
}_{\infty}^{    \frac{1}{k}    }    $    holds    for    every    $k$    where
${{\parallel{\cdot}\parallel}}_{\infty}$ is  the matrix norm  induced from the
$\infty$     vector    norm.     It     follows    from     $\rho(M)^k    \leq
{{\parallel{M^k}\parallel}}_{\infty}$   that   $\lim_{k  \rightarrow   \infty}
{{\parallel{M^k}\parallel}}_{\infty}  =  +\infty$.   Consequently since  ${\bf
  X}_A >0$ holds because every proved goal has a positive probability from our
assumption,  some  element of  ${M^k}{\bf  X}_A$  goes  to $  +\infty$,  which
contradicts the convergence of $\{ M^{k} {\bf X}_A \}_k$. So $\rho(M) \leq 1$.

Suppose  now  $\rho(M)  = 1$.   Then  in  this  case,  we note  that  $\left\{
{\displaystyle  \frac{ M^{k-1}+\cdots+I  }{k} }  \right\}_{k}$ converges  to a
positive     matrix    (Example     8.3.2,    \cite{Meyer00})     and    hence
$(M^{k-1}+\cdots+I){\bf        Y}_A       =        \left(       {\displaystyle
  \frac{M^{k-1}+\cdots+I}{k}} \right)  \cdot k{\bf Y}_A$ diverges  as $k$ goes
to   infinity,    which   contradicts    again   the   convergence    of   $\{
(M^{k-1}+\cdots+I){\bf   Y}_A   \}_{k}$.   Therefore   $\rho(M)   <  1$.    So
$(I-M)^{-1}$ exists.
\end{proof}

Note that ${\bf  X}_A = (I-M)^{-1}{\bf Y}_A =  (I+M+M^2+\cdots){\bf Y}_A$.  By
further analyzing the matrix $M$,  we understand that multiplying $M$ by ${\bf
  Y}_A$ for  example corresponds  to growing partial  parse trees by  one step
application  of  production rules  (reduce  operation  in bottom-up  parsing).
Hence  $P(A_i)$,  a component  of  ${\bf X}_A$,  becomes  an  infinite sum  of
probabilities  and  so  is   the  probability  of  the  top-goal  $P(\mbox{\tt
  pre\_pcfg($\ell$)})$.\\

We  sum up our  discussion so  far and  state in  Fig.~\ref{fig:prefix_comp} a
general procedure to compute probability on cyclic explanation graphs.  In the
case of PCFGs,  $\db$ is the prefix parser  in Fig.~\ref{fig:prism:prog0} with
appropriate {\tt values/2}  declarations encoding a given PCFG  and $G$ = {\tt
  pre\_pcfg($\ell$)} is a goal for  a prefix $\ell$.  Under our assumptions on
PCFGs,  ${\eq(G)}$  in   {\bf  [Step  2]}  is  guaranteed   to  be  linear  by
Theorem~\ref{th-2}   and    {\bf   [Step    3]}   is   always    possible   by
Theorem~\ref{th-3}.

\begin{figure*}[ht]
\rule{\textwidth}{0.25mm}\\
\begin{description}
\item[{\bf [Step 1]:}] Given a program $\db$ and  a goal $G$, construct an explanation graph ${\expl(G)}$.
\item[{\bf [Step 2]:}] Convert ${\expl(G)}$ to a set of probability equations ${\eq(G)}$.
\item[{\bf [Step 3]:}] Solve  ${\eq(G)}$ inductively from lower strata
     by matrix operation and obtain $\pdb(G)$.
\end{description}
\rule{\textwidth}{0.25mm}\\
\caption{ Probability computation on cyclic explanation graphs }
\label{fig:prefix_comp}
\end{figure*}

We  emphasize  that the  procedure  is  general  and applicable  to  arbitrary
programs that generate linear explanation graphs\footnote{
Currently the  PRISM system  returns an error  message when ${\eq(G)}$  is not
linear.
}, not  restricted to those  generated by a  prefix PCFG parser.  Also  we add
that   even    if   ${\eq(G)}$   is   nonlinear,   it    is   still   solvable
(Theorem~\ref{th-0}).   This  fact is  applied  to  the  computation of  infix
probability for PCFGs~\cite{Nederhof11a} though it is beyond the scope of this
paper and we do not discuss it.

\subsection{Prefix probability computation for a real PCFG}
Here  we apply our  approach to  real data  to show  the effectiveness  of our
approach. We use the ATR corpus and its PCFG~\cite{Uratani94}\footnote{
All experiments in this  paper are done on a single machine  with Core i7 Quad
2.67GHz$\times$2  CPU and 72GB RAM running OpenSUSE 11.2.
%
%
}.   The corpus  contains labeled  parse trees  for 10,995  Japanese sentences
whose  average length  is about  10.   The associated  manually developed  CFG
comprises 861 CFG rules (168 non-terminals and 446 terminals\footnote{
In  this paper, we  use part-of-speech  (POS) tag  sequences derived  from the
sentences instead  of sentences themselves.   So terminals in the  grammar are
POS tags.
}) and yields 958 parses/sentence on average.  A PCFG is prepared by assigning
probabilities  ({\em parameter\/}s) to  CFG rules  and is  encoded as  a PRISM
program just like the  one in Fig.~\ref{fig:prism:prog0} with appropriate {\tt
  values/2} declarations.
Using this PCFG,  we computed the average probability of  sentence and that of
prefix in the ATR corpus for comparison.  We randomly sampled 100 sentences of
a given length from the ATR  corpus and computed their average probability. We
then  deleted their  last word  and  created 100  prefixes for  which we  also
computed the average probability.

Fig.~\ref{fig:prefix_atr} contains  results of plotting  the (minus) logarithm
of average prefix probability and  that of average sentence probability for a
length varying from  2 to 22.  We  used two parameter sets for  the PCFG.  For
the left  figure (a), parameters are  uniform, i.e.\ if a  nonterminal $X$ has
$n$ rules  $\{X \rightarrow \alpha_i \mid  1 \leq i  \leq n \}$, each  rule is
selected  with probability  $1/n$. For  the right  figure (b),  parameters are
learned from the entire ATR corpus by the built-in EM algorithm in PRISM.

\begin{figure}[hb]
\begin{tabular}{cc}
\includegraphics[scale=0.7]{./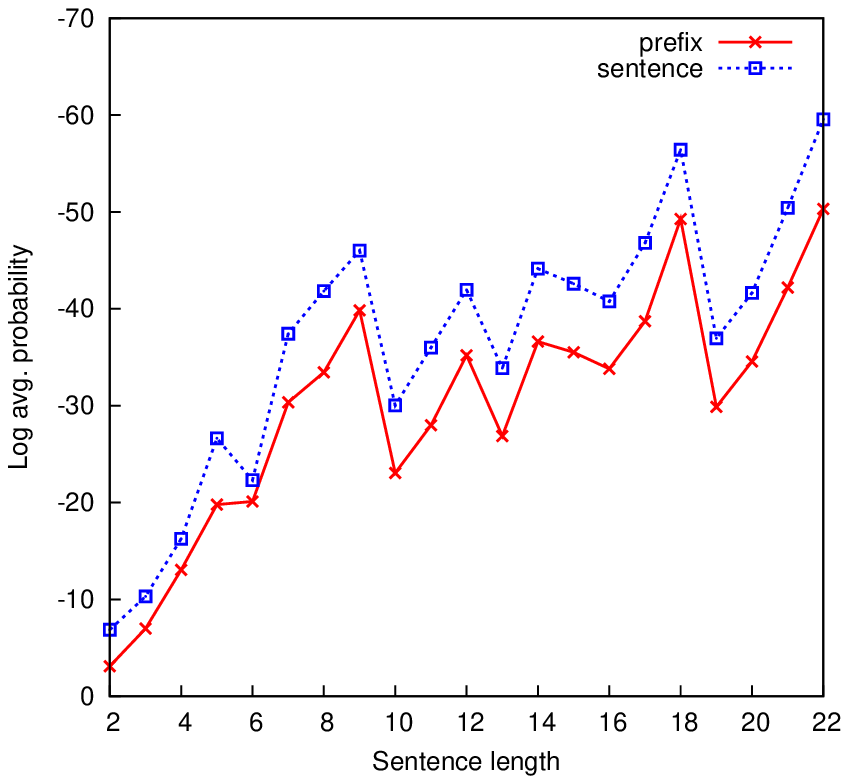}
&
\includegraphics[scale=0.7]{./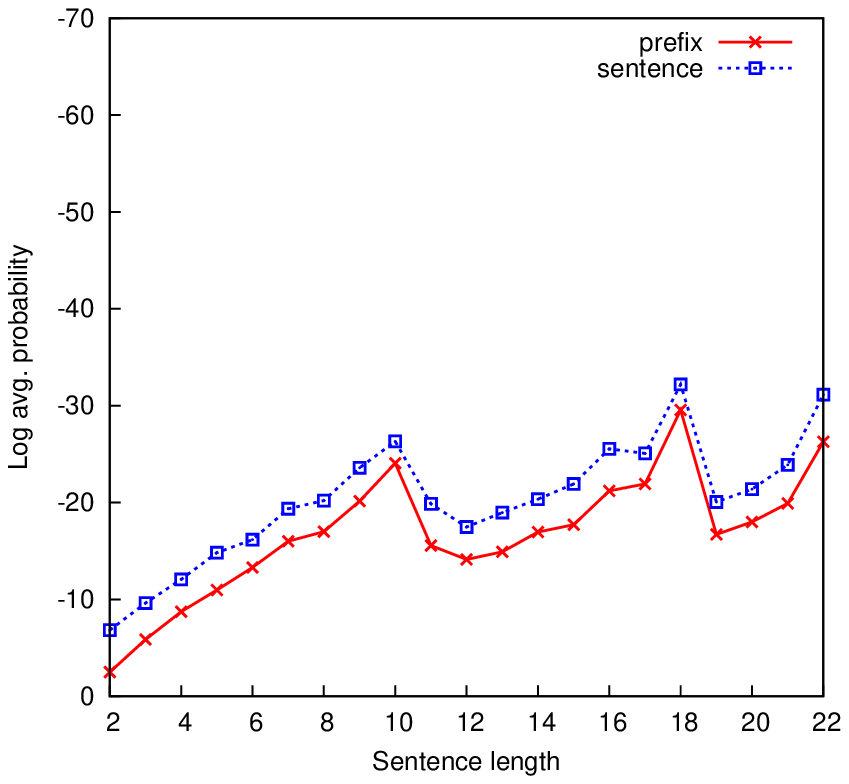} \\
\hspace{1em}(a) Uniform parameters & \hspace{1em}(b) Learned parameters
\end{tabular}
\caption{ Prefix and sentence probability for the ATR corpus }
\label{fig:prefix_atr}
\end{figure}

Seeing  these figures we  first note  that the  average prefix  probability is
always greater  than the average sentence  probability at each  length both in
(a) and in (b) as expected and second that the curves in (b) are much smoother
than the ones  in (a) and shifted downward considerably  (the y-axis is scaled
with  minus logarithm)  due to  the effect  of parameters  learned  by maximum
likelihood estimation.   It is also  observed that the difference  between the
two  curves in (b)  is smaller  than the  one in  (a), which  is statistically
confirmed by t-test at 0.05 significance level\footnote{
In (a), the  average difference between the two curves is  6.57 whereas in (b)
the average difference is 3.59.
}. 

One potential explanation for this phenomenon is as follows.  Let $uw$ be a
sentence in the ATR corpus where $w$ is the last word.  The difference between
the probability of prefix $u$ and  the probability of sentence $uw$ is the sum
of infinitely  many probabilities  of the sentences  $D$ extending  $u$ except
$uw$.  Since  most members  of $D$ do  not appear  in the corpus,  their total
probability computed  from the parameters  learned from the corpus  by maximum
likelihood  estimation considerably decreases  compared to  the case  of using
uniform parameters  where any one  of $D$ receives  non-negligible probability
mass.  Since this  happens to every prefix used in the  experiment, we see the
narrowing difference between the average sentence probability curve in (a) and
the average prefix probability curve in (b).

One of  the usage  of prefix  probability computation is  to predict  the most
likely next word of a prefix  $u$.  Let $P_{\rm cfg}(\cdot)$ be a distribution
over sentences  by a  PCFG.  Then the  {\em conditional  prefix probability\/}
$P_{\rm prefix}(w \mid u)$ of a word $w$ given $u$ is computed as
$P_{\rm  prefix}(w \mid u)
    \ddefined {\displaystyle \frac{P_{\rm prefix}(uw)}{P_{\rm prefix}(u)} }$
where ${\displaystyle P_{\rm prefix}(u) = \sum_{uv : {\rm sentence}} P_{\rm cfg}(uv) }$.

Since  we   found  the  prefix  probability   computation  is  computationally
burdensome for  long prefixes, we tested  short prefixes.  For  example, for a
prefix  $u  =  {\tt  [t\_interj\_hesit,t\_interj\_pre,t\_daimeisi\_domo]}$  of
length three  and a  word $w =  {\tt t\_myoji\_first}$, we  calculated
$P_{\rm prefix}(w  \mid  u)$,  assuming  equiprobable  rule  selection,  as
$P_{\rm  prefix}(w \mid  u) = \mbox{0.00103}$.   Thus by computing
$P_{\rm prefix}(w\mid u)$ for  all possible $w$s, we  can predict the most likely  next word of
the given prefix as $\argmax{w} P_{\rm prefix}(w \mid u)$.

\section{Prefix probability computation for PLCGs}
\label{sec:preplc}
In this section,  we deal with probabilistic  left-corner grammars (PLCGs) and
their prefix probability computation to test the generality of our approach.

PLCGs are a probabilistic version of left-corner grammars (LCGs) which in turn
are       a      generative       version       of      left-corner       (LC)
parsing~\cite{Manning97,Roark99,Uytsel01}  that   performs  bottom-up  parsing
using  three parsing  operations, i.e.\  shift, attach  and  project. Although
PLCGs and PCFGs  may share a common CFG,  they assign probability differently.
PCFGs  assign probability to  the expansion  of nonterminals  by CFG  rules in
top-down parsing whereas  PLCGs assign probability to the  three operations in
bottom-up parsing.  As a result they define different classes of distribution.\\

\begin{figure*}[t]
\rule{\textwidth}{0.25mm}\\ [-1em]
\begin{verbatim}
values(lc(s,s),[rule(s,[s,s])]).  values(lc(s,a),[rule(s,[a])]).
values(lc(s,b),[rule(s,[b])]).    values(first(s),[a,b]).
values(att(s),[att,pro]).

pre_plcg(L):- g_call([s],L,[]).     % L is a prefix
g_call([],L,L).
g_call([G|R],[Wd|L],L2):-
   ( G = Wd -> L1 = L               % shift operation
   ; msw(first(G),Wd),lc_call(G,Wd,L,L1) ),
   ( L1 == [] -> L2 = []            % (pseudo) success
   ; g_call(R,L1,L2) ). 
lc_call(G,B,L,L2):-                 % B-tree is completed   
   msw(lc(G,B),rule(A,[B|RHS2])),
   ( G == A -> true ; values(lc(G,A),_) ),
   ( L == [] -> L1 = []             % (pseudo) success
   ; g_call(RHS2,L,L1) ),
   ( G == A -> att_or_pro(A,Op),    % attach or project
     ( Op == att -> L2 = L1 ; lc_call(G,A,L1,L2) )
   ; lc_call(G,A,L1,L2) ).
att_or_pro(A,Op):- ( values(lc(A,A),_) -> msw(att(A),Op) ; Op=att ).
\end{verbatim}
\rule{\textwidth}{0.25mm}\\ 
\caption{ Prefix PLCG  parser $\db_1$ }
\label{fig:prism:prog1}
\end{figure*}

Since prefix probability  computation for PLCGs does not  seem to be attempted
before,    we    detail   how    a    prefix    PLCG    parser   $\db_1$    in
Fig.~\ref{fig:prism:prog1} works. It is a  serial parser and specialized for a
PLCG  whose underlying  CFG  is ${\bf  G}_0$  = \{  {\tt s}$\rightarrow$  {\tt
  s}\,{\tt s},  {\tt s}$\rightarrow$ {\tt a}, {\tt  s}$\rightarrow${\tt b} \},
the same as the one for $\db_0$ in Subsection~\ref{subsec:exampleprepcfg}.
In  the program {\tt  values(lc($g$,$b$),$r$)} introduces  {\tt msw}  atoms to
choose a CFG rule $g \rightarrow b\beta$ from $r$ where $g$ and $b$ are in the
left-corner relation of ${\bf G}_0$.  So {\tt values(lc(s,s),[rule(s,[s,s])])}
introduces        just       one        {\tt       msw}        atom       {\tt
  msw(lc(s,s),[rule(s,[s,s])])}\footnote{
Consequently  executing   {\tt  msw(lc(s,s),X)}   returns  ${\tt  X}   =  {\tt
  rule(s,[s,s])}$ with probability 1.
}. On the  other hand {\tt values(first(s),[a,b])} that  encodes the first set
of ``{\tt s}'' in ${\bf G}_0$\footnote{
The first set of a nonterminal $A$  is the set of terminals in the left-corner
relation with $A$.
} introduces  $\{ \mbox{\tt msw(first(s),a)},  \mbox{\tt msw(first(s),b)} \}$.
Similarly    {\tt   values(att(s),[att,pro])}    introduces    $\{   \mbox{\tt
  msw(att(s),att)},  \mbox{\tt msw(att(s),pro)}  \}$ to  make  a probabilistic
choice between attach and  project.  All probabilistic choices are equiprobable
by default.

Suppose  {\tt pre\_plcg($\ell$)}  is given  as a  top-goal where  $\ell$  is a
prefix.   To  parse $\ell$,  the  parser  repeatedly  performs shift  by  {\tt
  g\_call/3} and attach and project by {\tt lc\_call/4} just as in LC parsing.
The role of {\tt g\_call($\alpha$,$\ell$,L2)}  is to construct a partial parse
tree  whose leaves  are a  substring {\tt  $\ell$-L2} (as  d-list)  spanned by
$\alpha$ while instantiating {\tt L2} to  a sublist of $\ell$.  Let {\tt G} be
the left-most  symbol of $\alpha$ and  {\tt Wd} the left-most  word of $\ell$.
When {\tt G} is a terminal and coincides with {\tt Wd}, shift is performed and
{\tt Wd}  is read from $\ell$ as  an initial partial parse  tree consisting of
{\tt Wd}.  Otherwise  {\tt Wd} is considered as a  word randomly selected from
the first  set of {\tt G}  using {\tt msw(first(G),Wd)} as  an initial partial
parse tree.

\begin{figure*}[t]
\rule{\textwidth}{0.25mm}\\ [-1em]
\begin{verbatim}
pre_plcg([a,b]) <=> g_call([s],[a,b],[])
g_call([s],[a,b],[]) <=> lc_call(s,a,[b],[]) & msw(first(s),a)
lc_call(s,a,[b],[])
   <=> g_call([],[b],[b]) & att_or_pro(s,pro)
          & lc_call(s,s,[b],[]) & msw(lc(s,a),rule(s,[a]))
g_call([],[b],[b])
lc_call(s,s,[b],[])
   <=> g_call([s],[b],[]) & att_or_pro(s,att)                  --(1)
          & msw(lc(s,s),rule(s,[s,s]))
     v g_call([s],[b],[]) & att_or_pro(s,pro)                  --(2)
          & lc_call(s,s,[],[]) & msw(lc(s,s),rule(s,[s,s]))
g_call([s],[b],[]) <=> lc_call(s,b,[],[]) & msw(first(s),b)    --(3)
lc_call(s,b,[],[])
   <=> att_or_pro(s,att) & msw(lc(s,b),rule(s,[b]))
     v att_or_pro(s,pro) & lc_call(s,s,[],[])
          & msw(lc(s,b),rule(s,[b]))
lc_call(s,s,[],[])
   <=> att_or_pro(s,att) & msw(lc(s,s),rule(s,[s,s]))
     v att_or_pro(s,pro) & lc_call(s,s,[],[])
          & msw(lc(s,s),rule(s,[s,s]))
att_or_pro(s,att) <=> msw(att(s),att)
att_or_pro(s,pro) <=> msw(att(s),pro)
\end{verbatim}
\rule{\textwidth}{0.25mm}\\ 
\caption{ Explanation graph for {\tt pre\_plcg([a,b])}  }
\label{fig:prism:probf1}
\end{figure*}

A call to  {\tt lc\_call(G,B,L,L2)} occurs when a  {\tt B}-tree (partial parse
tree  whose root  node  is {\tt  B})  is constructed  and {\tt  G}  is in  the
left-corner   relation   with   {\tt   B}.    It  grows   the   {\tt   B}-tree
probabilistically either by attach or by  project using a CFG rule of the form
{\tt A}$\rightarrow${\tt  B}$\beta$ until a {\tt G}-tree  is constructed while
consuming words in {\tt L}, leaving {\tt L2}.  When the input is a prefix, the
parser  returns with  pseudo success  as  soon as  the prefix  is consumed  as
indicated by the comment ``{\verb!(pseudo) success!}''.\\

When  {\tt pre\_plcg([a,b])}  is given  as a  top-goal for  example,  a linear
explanation graph shown in Fig.~\ref{fig:prism:probf1} is constructed in which
{\tt lc\_call(s,s,[],[])} calls itself as a top-most looping subgoal.
Now we  analyze Fig.~\ref{fig:prism:probf1} to  confirm that our  PLCG program
correctly  recognizes  all  partial  parse  trees  for  prefix  ``{\tt  ab}''.
Fig.~\ref{fig:prism:probf1}  compactly represents as  a form  of propositional
PRISM program  all computation paths (sequences of  probabilistic choices made
by {\tt msw}  atoms) that generate prefix ``{\tt  ab}''. Each path corresponds
to a partial  parse tree for ``{\tt ab}''.  We write  partial parse trees like
$\mbox{\tt s(s(s(a),s(b)),s)}$.  We denote by ${\bf T}_i$ ($i=1,2,3$) a set of
partial parse trees generated by  computation paths corresponding to line {\tt
  ($i$)} in Fig.~\ref{fig:prism:probf1}.

Then observe for example that computation paths going through {\tt (1)} yield
partial parse  trees by combining ones generated  by {\tt g\_call([s],[b],[])}
and  ones obtained  by  {\tt s}-trees  grown  by attach  operation using  rule
\mbox{{\tt s} $\rightarrow$  {\tt s}\,{\tt s}}.  This observation  leads to an
equation  \mbox{${\bf T}_1$}  =  {\tt s(s(a),\mbox{${\bf  T}_3$})} where  {\tt
  s(s(a),${\bf T}_3$)}  stands for the set $\{  \mbox{\tt s(s(a),$\tau$)} \mid
\tau \in {\bf T}_3 \}$.  In this way we obtain three equations below.
\begin{eqnarray*}
\mbox{\tt Eq 1:}\hspace{1em}\mbox{${\bf T}_1$} 
   & = & {\tt s(s(a),\mbox{${\bf T}_3$} )}  \\
\mbox{\tt Eq 2:}\hspace{1em}\mbox{${\bf T}_2$}
   & = & {\tt s( \mbox{${\bf T}_1$},s)}
        \cup {\tt s(\mbox{${\bf T}_2$},s)}\\
\mbox{\tt Eq 3:}\hspace{1em}\mbox{${\bf T}_3$}
   & = & \{ {\tt s(b)} \} 
        \cup {\tt s(\mbox{${\bf T}_3$},s)}
\end{eqnarray*}

By solving  them   we  know   that
${\bf  T}_3 =  
\{\hspace{0.3em}
   \overbrace{{\tt s(}\cdots{\tt s(}}^m{\tt s(b),}
   \overbrace{{\tt s)}\cdots{\tt s)}}^m
                   \mid m\geq 0 
\hspace{0.3em}\}$
and so on.   Also recall that all computation paths for  ``{\tt ab}'' have to
prove {\tt  lc\_call(s,s,[b],[])} and  hence have to  go through {\tt  (1)} or
{\tt  (2)} in  Fig.~\ref{fig:prism:probf1}.  Consequently  the set  of partial
parse trees  for prefix  ``{\tt ab}'' generated  by $\db_1$ is  represented as
${\bf T}_1 \cup {\bf T}_2$ where
${\bf  T}_1  \cup {\bf T}_2 =
\{\hspace{0.3em}
    \overbrace{{\tt s(}\cdots{\tt s(}}^n
       {\tt s(s(a),}
              \overbrace{{\tt s(}\cdots{\tt s(}}^m{\tt s(b),}
              \overbrace{{\tt s)}\cdots{\tt s)}}^m
            {\tt )}
     \overbrace{{\tt s)}\cdots{\tt s)}}^n
                  \mid m\geq 0, n\geq 0
\hspace{0.3em}\}$
which  certainly represents all partial  parse trees  for prefix ``{\tt  ab}''.\\

The probability  equations derived from  Fig.~\ref{fig:prism:probf1} are shown
in       Fig.~\ref{fig:prism:probeq1}.        We       have       $P(\mbox{\tt
  g\_call([],[b],[b])})=1$ as  {\tt g\_call([],[b],[b])} is  logically proved.
Suppose   the   probabilities   of   \mbox{\tt   msw(att(s),att)},   \mbox{\tt
  msw(att(s),pro)}, \mbox{\tt  msw(first(s),a)} and \mbox{\tt msw(first(s),b)}
are all  set to 0.5. Then  the solution becomes {\tt  X1} = {\tt  X2} = 0.125,
{\tt X3} = 0.25, {\tt X4} = 1, {\tt  X5} = {\tt X6} = 0.5, {\tt X7} = {\tt X8}
=  1  and  {\tt  X9}  =  {\tt   X10}  =  0.5.   So  the  probability  of  {\tt
  pre\_plcg([a,b])} is computed as {\tt X1} = 0.125.\\

\begin{figure*}[t]
\rule{\textwidth}{0.25mm}\\[0.6em]
$
\begin{array}{rlrll}
P( {\tt pre\_plcg([a,b])} )      & = & {\tt X1} & = & {\tt X2}  \\
P( {\tt g\_call([s],[a,b],[])} ) & = & {\tt X2} & = & {\tt X3}\cdot\mbox{0.5} \\
P( {\tt lc\_call(s,a,[b],[])} )  & = & {\tt X3} & = &
      {\tt X4}\cdot{\tt X10}\cdot{\tt X5}\cdot{1}                \\
P( {\tt g\_call([],[b],[b])} )   & = & {\tt X4} & = & 1          \\
P( {\tt lc\_call(s,s,[b],[])} )  & = & {\tt X5} & = &
     {\tt X6}\cdot{\tt X9}\cdot{1} + {\tt X6}\cdot{\tt X10}\cdot{\tt X8}\cdot{1} \\
P( {\tt g\_call([s],[b],[])} )   & = & {\tt X6} & = & {\tt X7}\cdot\mbox{0.5}    \\
P( {\tt lc\_call(s,b,[],[])} )   & = & {\tt X7} & = &
     {\tt X9}\cdot{1} + {\tt X10}\cdot{\tt X8}\cdot{1}        \\
P( {\tt lc\_call(s,s,[],[])} )   & = & {\tt X8} & = &
     {\tt X9}\cdot{1} + {\tt X10}\cdot{\tt X8}\cdot{1}        \\
P( {\tt att\_or\_pro(s,att)} )  & = &  {\tt X9} & = & \mbox{0.5}  \\
P( {\tt att\_or\_pro(s,pro)} ) & = & {\tt X10} & = & \mbox{0.5}  \\[0.6em]
\end{array}
$
\rule{\textwidth}{0.25mm}\\ 
\caption{ Probability equations for prefix ``{\tt ab}'' }
\label{fig:prism:probeq1}
\end{figure*}

Finally we test  prefix probability computation for PLCGs  with real data.  We
prepared  a prefix  PLCG  parser like  the  one in  Fig.~\ref{fig:prism:prog1}
adapted  for the  ATR  corpus and  conducted  prefix probability  computation.
Since the prefix PLCG parser is much larger than the corresponding prefix PCFG
parser, containing over 20,000  {\tt values/2} declarations, learning time and
computation time are expected to be much longer than the PCFG case.
Indeed, we  measured CPU time for the  PCFG and the PLCG  respectively used to
compute the  probabilities of 100 prefixes  created from 100  sentences in the
ATR corpus  by deleting their  last word.  The  PCFG case took  12.3 ms/prefix
whereas the PLCG case took 5.9 sec/prefix, 48 times slower than the PCFG case.
We also  computed conditional probability  $P_{\rm prefix}(w \mid u)$  for PLCG
prefixes.      For    a     pair     of    the     prefix     $u    =     {\tt
  [t\_interj\_hesit,t\_interj\_pre,t\_daimeisi\_domo]}$ and the word $w = {\tt
  t\_myoji\_first}$  used in  Section~\ref{sec:prepcfg}  for example,  $P_{\rm
  prefix}(w  \mid u)$  is computed  as  0.00032 which  is considerably  smaller
compared to 0.00103 computed for the PCFG case.

\section{Plan recognition}
\label{sec:plan_recog}
Prefix probability  computation has practical applications.   In this section,
we apply it to plan recognition  using artificial data.  Plan recognition is a
task of inferring  a plan (intension) from a sequence  of observed actions and
has been  pursued for example  in robotics to  interpret video scene  data and
sensor data.  One  way to perform plan recognition is to  use a formal grammar
to  describe the  relation  between  plans and  action  sequences by  equating
sentences with action  sequences and nonterminals with plans.  However to cope
with noisy observations,  it is natural to use  probabilistic grammars such as
PCFGs~\cite{Bobick98,Lymberopoulos07,Amft07,Geib11,Pomponio11}.\\

Consider  a simple  PCFG in  Fig.~\ref{tbl:plan_recog_1}  where S  is a  start
symbol.   It describes  how  four plans,  i.e.\  \{ Pl(playing),  St(studying),
Cl(cleaning), Mo(mowing) \} generate sequences of observable actions, i.e.\ \{
play, study, clean, mow \}.

\begin{figure}[h]
\rule{\textwidth}{0.25mm}\\ [0.2em]
\begin{tabular}{l@{$\ \rightarrow\ $}lrlrlrlrl}
   S & Pl    &: 0.1 $|$& St       &: 0.4 $|$& Cl    &: 0.3 $|$& Mo    &: 0.2 \\
   Pl& play  &: 0.5 $|$& play Pl  &: 0.3 $|$& Cl    &: 0.1 $|$& Mo    &: 0.1 \\
   St& study &: 0.1 $|$& study St &: 0.3 $|$& Pl St &: 0.2 $|$& Cl St &: 0.4 \\
   Cl& clean &: 0.4 $|$& clean Cl &: 0.5 $|$& Pl Cl &: 0.1 $ $&              \\
   Mo& mow   &: 0.3 $|$& mow Mo   &: 0.1 $|$& Pl Mo &: 0.4 $|$& Cl Mo &: 0.2   
\end{tabular}\\ 
\rule{\textwidth}{0.25mm}
\caption{ PCFG for plan recognition }
\label{tbl:plan_recog_1}
\end{figure}

This  PCFG generates  action sequences  such as  ``play clean'',  ``play study
study'' and so on.  Note that  although ``play clean'' is a sentence derivable
from Pl and Cl, it is also derivable  from Cl and Mo as a prefix.  Our task is
to predict, given  such a sequence $x$  of actions which may be  a prefix, the
most likely plan $y^* = \argmax{y} P_{\rm prefix}({\rm S} \rightarrow y, y
\derive x)$ where  $y$ ranges over \{ Pl,  St, Cl, Mo \} as  a recognized plan
for $x$.  For example, for $x$ =  ``play clean'', $y^*$ = St is the recognized
plan  giving  the  highest  probability  0.0272  for  $P_{\rm  prefix}({\rm  S}
\rightarrow y, y \derive x)$.

To evaluate the accuracy of our  prediction method, we take a random sample of
100 prefixes  (action sequences)  together with their  plans and  evaluate the
accuracy  of  prediction  by  predicting  the plan  for  each  sampled  action
sequence.  Prefixes are sampled so that they are not shorter than a threshold,
referred to hereafter as ``minlen''  (minimum length).  Finally we compute the
average accuracy  of prediction over  10 runs.  We  tested two cases.   One is
full observation, i.e.\ prefixes are  restricted to sentences.  The other case
is  no restriction.   Fig.~\ref{fig:plan_recog_1} shows  the  average accuracy
w.r.t.\  minlen varying  from 1  to 15.   The blue  curve corresponds  to full
observation   (sentence)   whereas  the   red   one   corresponds  to   prefix
observation\footnote{
We  measured the CPU  time for  plan recognition  with randomly  generated 100
action sequences  whose average length is  4.18 (with std  3.02).  We obtained
1.79ms/action sequence as the average time for plan recognition.
}.

\begin{figure}[h]
\begin{minipage}{0.3\hsize}
\centerline{
\includegraphics[scale=0.85]{./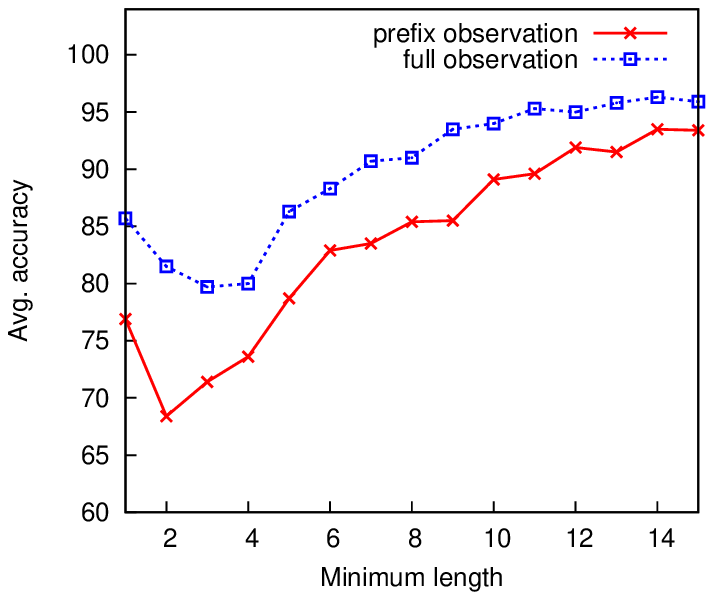}
}
\end{minipage}
\caption{ Plan recognition }
\label{fig:plan_recog_1}
\end{figure}

We  first notice that  full observation  always gives  a better  accuracy than
prefix  observation.   This may  be  attributed  to  the fact  that  ambiguity
measured  by the  average number  of possible  plans for  an  action sequence,
termed ``amb'' here, in  the case of full observation is less  than the amb in
the case of prefix observation at all  minlen values (1 to 15). We also observe
that the average accuracy (almost) monotonically increases as minlen increases
in both  cases.  This is  intuitively obvious because longer  action sequences
should give  more clue to prediction  and reduce the  ambiguity about possible
plans.   Actually amb monotonically  decreases w.r.t.\  minlen.  On  the other
hand, however, this explanation conflicts with the initial drop in both curves
w.r.t\ minlen, so we still need a coherent explanation.

\section{Reachability probability}
\label{sec:reachability}
Computing  probability  through  cyclic  explanation graphs  has  applications
beyond  prefix   probability  computation.   In  this   section,  inspired  by
\cite{Gorlin12}, we take up  the problem of computing reachability probability
in discrete Markov  chains.  Fig.~\ref{fig:reachable_1} illustrates an example
of  Markov  chain  (the  left-hand  side,  (a)) and  its  PRISM  program  (the
right-hand  side,  (b)), both  borrowed  from  \cite{Gorlin12}  with a  slight
modification of the program.

\begin{figure*}[h]
\begin{tabular}{cc}
\includegraphics[scale=0.75]{./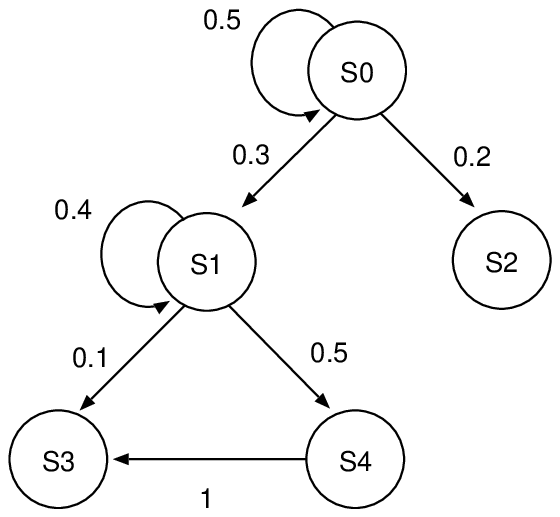}
&
\hspace{2em}
\includegraphics[scale=0.58]{./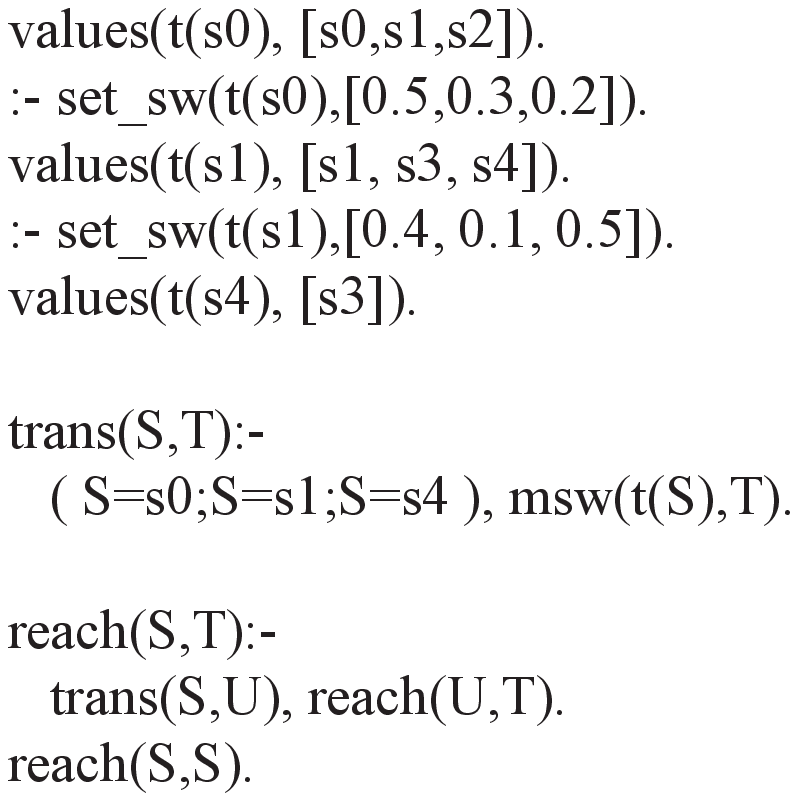} \\
(a)  &  (b)
\end{tabular}
\caption{ Markov chain (a) and a program (b) }
\label{fig:reachable_1}
\end{figure*}

A state transition in a Markov chain is made by a probabilistic choice of next
state. Since the choice is exclusive  and independent at each state, PRISM can
simulate  Markov  chains, except when there  is  a  self-loop, or  more
generally there  is a set of state  transitions forming a loop.   In this case
probability computation requires an  infinite sum of probabilities which PRISM
has  been unable to  deal with.   However, by  applying the  general procedure
described  in  Fig.~\ref{fig:prefix_comp},  we  are  now able  to  compute  an
infinite sum of probabilities,  in particular for the reachability probability
problem.  For example, the reachable probability  from {\tt s0} to {\tt s3} is
represented as P({\tt reach(s0,s3)}) and is computed by the program as 0.6.

In  the  following  we  tackle  a  more complicated  problem  and  verify  the
Synchronous Leader Election Protocol as described in a web page\footnote{
{\tt http://www.prismmodelchecker.org/casestudies/synchronous\_leader.php}
} for the PRISM model checker~\cite{Kwiatkowska11} as one of the case studies.
The protocol  probabilistically elects  a leader among  processors distributed
over a ring network communicating  by synchronous message passing.  It has two
parameters, N,  the number of  processors and K,  the number of  candidate ids
used for  election.  Our task is  to show that  a leader will be  elected with
probability one.   We use a PRISM  program faithfully translated  from the one
shown in the  web page with one exception. That  is, we separate probabilistic
transition from  deterministic transition and only  the predicate representing
the former is tabled using PRISM's {\tt p\_table} declaration\footnote{
{\tt :-  p\_table q/n} implies that  the probabilistic predicate  {\tt q/n} is
tabled {\em  and\/} all  other probabilistic predicates  not declared  by {\tt
  p\_table} will not  be tabled.  By default all  probabilistic predicates are
tabled in PRISM, which  sometimes makes explanation graphs unnecessarily large
in view of probability computation due to the introduction of defining clauses
without  {\tt  msw}s  in  the  body.   Selective  tabling  by  {\tt  p\_table}
declarations prevents this.
}.

\noindent
\begin{figure}[h]
\centerline{\includegraphics[scale=0.7]{./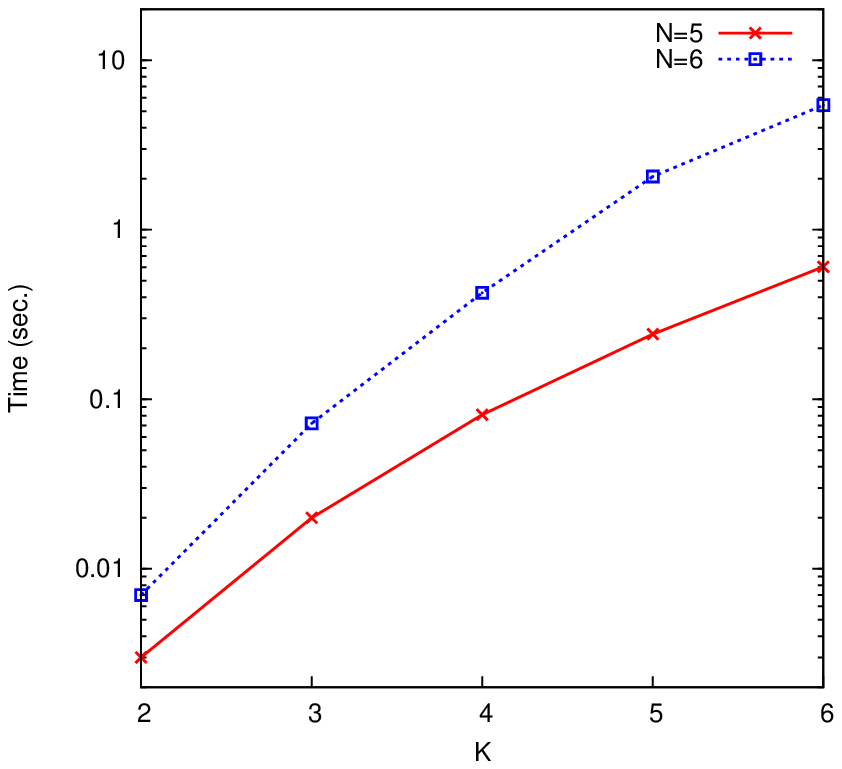}}
\caption{ CPU time for checking the Synchronous Leader Election Protocol }
\label{fig:mck}
\end{figure}

Fig.~\ref{fig:mck} shows CPU time taken for verification, varying N and K.  As
we see,  the plotted curves  for N=5 and  N=6 look alike  and the CPU  time is
almost exponential in K.   We note that they are similar in  shape to the ones
(PIP-full) obtained  by \citeN{Gorlin12} who conducted the  same experiment to
compare  their  approach  with  the  PRISM model  checker.  However  an  exact
comparison with our  approach would be difficult because  of the difference in
CPU   processors   and  more   seriously   because   of   the  difference   in
coding\footnote{
In (Gorlin  et al. 2012) the authors  used a 2.5GHz processor  and encoded the
Synchronous Leader Election Protocol problem  via a PCTL model checker whereas
we  used a  2.67GHz processor  and  directly encoded  the problem  as a  PRISM
program.
}.

\section{Related work and future work}
\label{sec:relatedwork}
Tabling  in  logic  programming  has  traditionally  been  used  to  eliminate
redundant computation and to avoid infinite loop, but the use of loop detected
by tabling for infinite probability  computation seems new, though tabling for
(finite)  probability  computation  is  well-known  and  implemented  in  some
probabilistic  logic  programming   languages  such  as  PRISM~\cite{Sato01g},
ProbLog~\cite{Mantadelis10}  and   PITA~\cite{Riguzzi11}.   This  is  probably
because looping goals have long  been considered useless despite the fact that
they make sense if probabilities are involved and the loop computes converging
probabilities like prefix probability computation.\\

Technically our approach is closely related to \cite{Gorlin12} in which Gorlin
et  al.\  proposed  PIP  (probabilistic  inference  plus)  that  computes  the
probability of infinitely many explanations and applied PIP to model checking.
In PIP, to compute the probability  of a query Q w.r.t.\ a probabilistic logic
program   P,   a   residual   program   is   first   constructed   using   XSB
Prolog~\cite{Swift12} from  P and  Q.  Then  it is converted  to a  DCG called
equation generator that generates possible explanations for Q as strings, from
which a  factored explanation  diagram (FED) is  derived.  It is  a compressed
representation of  the set  of (possibly infinitely  many) explanations  for Q
w.r.t.\  P and further  converted to  a system  of polynomial  equations.  The
probability of Q is obtained by solving the equations.

The basic idea  of PIP is similar to our  approach: probability computation by
solving a set of equations derived  from a symbolic diagram constructed from a
program and a query. Nonetheless there are substantial differences between PIP
and our  approach.  First  PIP uses  {\tt msw/3} that  has three  arguments in
which  the  second  argument  (trial-id  \cite{Sato01g})  is  a  term  (clock)
indicating when the {\tt msw} is executed in the computing process.  To ensure
statistically  correct  treatment  of  the  second  argument  for  probability
computation, PIP requires programs to be ``temporally well-formed'' and places
three  syntactic  conditions on  the  occurrences  of ``instance  arguments'',
i.e.\ arguments that  work as a clock.  These  conditions look restrictive but
how they  affect the class of  definable probabilistic models or  how they are
related to PRISM programs is unclear and not discussed in \cite{Gorlin12}.

PRISM on  the other hand uses {\tt  msw/2} that omits the  second argment from
{\tt  msw/3} for computational  efficiency and  allows arbitrary  programs but
instead assumes  every occurrence of {\tt msw/2}  in a proof for  the query is
independent ({\em independence condition}) which guarantees the correctness of
probability computation in PRISM.

Also PIP constructs an FED, BDD-like graphical structure representing a set of
explanations  via  a DCG  (equation  generator)  whereas  PRISM constructs  an
explantion graph without  using a DCG.  FEDs are powerful;  they enable PIP to
deal with programs that violate  the exclusiveness condition required by PRISM
while  capturing common  patterns in  the  set of  explanations. However  when
programs satisfy  the exclusiveness condition (and  the independence condition
as well) as  is often the case in probabilistic  modeling by generative models
such as BNs,  HMMs, PCFGs and PLCGs, the construction  of FEDs is unnecessary.
A simpler structure, explanation graphs,  is enough.  As we have demonstrated,
the sum  of probablities  of infinitely many  explanations can  be efficiently
computed by cyclic explanation graphs in such cases.

In addition,  though it  is not clearly  stated in \cite{Gorlin12},  Gorlin et
al.\ seem  to solve the set of  equations by an iterative  method described in
\cite{Etessami09} that is applicable  to nonlinear cases.  PRISM contrastingly
assumes the linearity of equations and efficiently solves hierarchally ordered
sets of system of linear equations, corresponding to SCCs, by matrix operation
in  cubic  time  in  the  number  of variables.   Considering  the  fact  that
nonlinearity  occurs  even  in  the  case  of  PCFGs  when  we  compute  infix
probability~\cite{Nederhof11a} however, it is important future work to enhance
PRISM's equation solving ability for nonlinear cases.

Current  tabling  in  PRISM employs  linear  tabling  in  B-Prolog and  it  is
straightforward to  construct cyclic explanation graphs  from defining clauses
for  tabled answers  stored in  the memory.   Constructing  cyclic explanation
graphs  in other  Prolog  systems  such as  XSB~\cite{Swift12}  that employ  a
suspend-resume mechanism for tabling also seems possible.

Approximate computation  of prefix probability  seems possible for  example by
the iterative deepening algorithm used in ProbLog \cite{DeRaedt07}. To develop
such an approximation algorithm remains as future work.

Prefix     probability     computation     is     mostly     studied     about
PCFGs~\cite{Jelinek91,Stolcke95,Nederhof11a}.   \citeN{Jelinek91}  proposed  a
CKY  like algorithm  for prefix  probability computation  in PCFGs  in Chomsky
normal  form.  Their algorithm  does not  perform parsing  but instead  uses a
single monolithic matrix  whose dimension is the number  of nonterminals which
is constructed from a given PCFG.  It runs in $O(N^3)$ where $N$ is the length
of an  input prefix.   \citeN{Stolcke95} applied the  Earley style  parsing to
compute prefix probabilities.  His  algorithm uses a matrix of ``probabilistic
reflexive,  transitive  left-corner relation''  computed  from  a given  PCFG,
independently of input sentences  similarly to \cite{Jelinek91}.  Our approach
differs from  them first in that it  is general and works  for arbitrary PRISM
programs and second in that it  constructs an explanation graph for each input
prefix and  probabilities are computed on  the basis of the  SCCs derived from
the explanation graph.

\citeN{Nederhof11a}  generalized prefix probability  computation for  PCFGs to
infix probability computation for PCFGs.  They also studied prefix probability
computation   for  a   variant  of   PCFGs~\cite{Nederhof11b}.    Nederhof  et
al.\  proposed prefix  probability computation  for stochastic  tree adjoining
grammars~\cite{Nederhof98}. However, prefix  probability computation for PLCGs
has been unknown and our  example in Section~\ref{sec:preplc} is the first one
to our knowledge.

Applying   prefix    probability   computation   to    plan   recognition   in
Section~\ref{sec:plan_recog} is not new  but our approach generalizes previous
grammar-based
approaches~\cite{Bobick98,Lymberopoulos07,Amft07,Geib11,Pomponio11} in that it
allows  for  incomplete  action  sequences  (prefixes)  as  observations.   In
relation  to plan  recognition, it  is  possible to  apply prefix  probability
computation to predict the most  likely action (word) that follows an observed
action sequences (prefix)~\cite{Jelinek91}, though we do not discuss it here.

We eliminated in  this paper one of the restrictive  assumptions on PRISM that
the number  of explanations for a  goal is finite. However  there still remain
restrictive  assumptions, the  exclusiveness assumption  and  the independence
assumption~\cite{Sato01g}.       Their     elimination      by     introducing
BDDs~\cite{DeRaedt07,Riguzzi11} or FEDs~\cite{Gorlin12} remains future work.

\section{Conclusion}
\label{sec:conclusion}
We  have   proposed  an  innovative  use  of   tabling:  infinite  probability
computation based on  cyclic explanation graphs generated by  tabled search in
PRISM.  It  generalizes  prefix  probability  computation  for  PCFGs  and  is
applicable to probabilistic models described  by PRISM programs in general and
to  non-PCFG  probabilistic  grammars  such  as  PLCGs  in  particular  as  we
demonstrated.   We  applied  our  approach  to plan  recognition  and  to  the
reachability probability  problem in probabilistic model  checking.  We expect
that  our approach  provides a  declarative way  of  logic-based probabilistic
modeling of cyclic relations.

\end{document}